\newtheorem{lemma}{Lemma}
\journal{Physica D: Nonlinear Phenomena}
\begin{document}

\thispagestyle{empty}
\onecolumn
\noindent {\large \\~\\~\\~\\~\\~\\~\\~\\ \copyright ~2020. This manuscript version is made available under the CC-BY-NC-ND 4.0 license \url{http://creativecommons.org/licenses/by-nc-nd/4.0/} \\~\\

\noindent This is a peer-reviewed and accepted version of the following in press document: \\~\\

\noindent K. Y. Ng and M. M. Gui. ``COVID-19: Development of a Robust Mathematical Model and Simulation Package with Consideration for Ageing Population and Time Delay for Control Action and Resusceptibility,'' {\it Physica D: Nonlinear Phenomena (In Press)}, 2020.}
\newpage

\begin{frontmatter}



  \title{COVID-19: Development of a Robust Mathematical Model and Simulation Package with Consideration for Ageing Population and Time Delay for Control Action and Resusceptibility}


  \author[a,c]{Kok Yew Ng\corref{cor1}}
  \ead{mark.ng@ulster.ac.uk}
  \author[b]{Meei Mei Gui}
  \ead{m.gui@qub.ac.uk}

  \cortext[cor1]{Corresponding author.}
  \address[a]{Nanotechnology and Integrated BioEngineering Centre (NIBEC), Ulster University, Jordanstown Campus, Shore Road, Newtownabbey BT37 0QB, UK.}
  \address[b]{School of Chemistry and Chemical Engineering, David Keir Building, Queen's University Belfast, Stranmillis Road, Belfast BT9 5AG, UK.}
  \address[c]{Electrical and Computer Systems Engineering, School of Engineering, Monash University, Malaysia.}

  \begin{abstract}
    The current global health emergency triggered by the pandemic COVID-19 is one of the greatest challenges mankind face in this generation. Computational simulations have played an important role to predict the development of the current pandemic. Such simulations enable early indications on the future projections of the pandemic and is useful to estimate the efficiency of control action in the battle against the SARS-CoV-2 virus. The SEIR model is a well-known method used in computational simulations of infectious viral diseases and it has been widely used to model other epidemics such as Ebola, SARS, MERS, and influenza A. This paper presents a modified SEIRS model with additional exit conditions in the form of death rates and resusceptibility, where we can tune the exit conditions in the model to extend prediction on the current projections of the pandemic into three possible outcomes; death, recovery, and recovery with a possibility of resusceptibility. The model also considers specific information such as ageing factor of the population, time delay on the development of the pandemic due to control action measures, as well as resusceptibility with temporal immune response. Owing to huge variations in clinical symptoms exhibited by COVID-19, the proposed model aims to reflect better on the current scenario and case data reported, such that the spread of the disease and the efficiency of the control action taken can be better understood. The model is verified using two case studies for verification and prediction studies, based on the real-world data in South Korea and Northern Ireland, respectively.
  \end{abstract}

  \begin{keyword}
    COVID-19 \sep Coronavirus \sep SEIRS \sep Resusceptibility \sep Time delay \sep Modelling \sep Simulation
  \end{keyword}
\end{frontmatter}


\section{Introduction}\label{Intro}
The coronavirus disease COVID-19 is a respiratory infection disease caused by the novel coronavirus, SARS-CoV-2. The first COVID-19 outbreak was reported in Wuhan of Hubei Province, China at the end of December 2019. Within just two months, the disease has rapidly spread across the world and it has been declared a global pandemic in early March 2020. As of 20th April 2020, the virus has affected close to 2.5 million people with approximately 170,000 confirmed deaths across at least 184 countries \cite{LancetDong}.

The symptoms caused by the SARS-CoV-2 virus have large variations with most people only experiencing mild to moderate respiratory illnesses and only a smaller group of people would develop complications of respiratory failure or acute respiratory distress syndrome. Based on clinical data reported from Wuhan where the outbreak began, elderly patients have been identified to have higher odds to experience severe symptoms with higher mortality rate compared to people of younger age \cite{Zhou2020}. Study also shows that up to approximately 80\% of the people infected with SARS-CoV-2 are asymptomatic  carriers, i.e. they experience no or mild symptoms but are still able to transport the virus to others \cite{prompet2020}. This has caused the detection and containment of the SARS-CoV-2 virus to be much more challenging. As a result, social distancing has been widely implemented in many countries worldwide to slow down the transportation of the virus through minimised human-to-human contact. Individuals who have recovered from COVID-19 after experiencing mild or moderate symptoms are more likely to develop temporary resistant towards the virus and are unlikely to experience severe respiratory illnesses \cite{Thevarajan2020}. However, in rare occasions, there have been clinical findings showing that patients who have recovered from the disease have been tested positive again. For instance, in February 2020, a patient in Osaka, Japan, has been tested positive towards the SARS-CoV-2 a few days after being discharged from the hospital for treatment with the disease \cite{Jap2020}. Due to very limited knowledge on the immune response of humans to this novel virus, the possibility of reinfection cannot be ruled out at the moment.

Mathematical modelling and computational simulations have played important roles in describing the dynamics of infectious diseases using nonlinear systems so that their risks could be better understood \cite{vanden2017,PENG2013,ZHANG2012,Keeling,vanden2008,Lekone2006}. Most notably, the SEIR (Susceptible-Exposed-Infected-Removed/Recovered) model has been widely reported during the past decades in quantitative modelling studies of infectious epidemic/pandemic, such as the severe acute respiratory syndrome (SARS) in 2002 \cite{Dye2003}, the influenza A (pH1N1) pandemic in 2009 \cite{Saito2013}, the Middle East respiratory syndrome (MERS) pandemic in 2013 \cite{Chowell2014}, as well as the latest Ebola outbreak in 2018 \cite{Diaz2018}. The SEIR model represents a typical infectious epidemic disease using four distinctive phases; susceptible (S) represents the population that has yet to be infected by the virus; exposed (E) represents the number of individuals exposed to the virus; infected (I) models the number of people infected, have demonstrated symptoms, and are able to spread the virus to the people in the S compartment; and lastly, recovered/removed (R) models the number of people who have recovered and are assumed to have immune response towards the virus \cite{LIN2020, Lopez2020, Kucharski2020, LaSen}. Thus, based on the model, the S compartment will slowly deplete as the outbreak prolongs further, and the virus will eventually die out due to insufficient population within the S compartment. This compartmental modelling method allows transport of population from one compartment to another, where the disease transmission rates with respect to time can be simulated.

In this work, we propose a modified SEIRS model based on the Kermack-McKendrick model \cite{kermack1927} with consideration for time delay and resusceptibility to the virus after recovery. In this model, the probability of a recovered patient to be reinfected with SARS-CoV-2 is taken into consideration to predict the future projection of COVID-19 cases. Resusceptibility is one of the crucial keys that could possibility lead to a prolonged COVID-19 pandemic. Time delay in the control action representing the time taken for the authorities to act on the virus and also the duration of short-term immunity after recovery, which may lead to resusceptibility, are also considered in the model. The time delay factor is applied to enhance the robustness and accuracy of the model and simulation, hence to better reflect on the timely situation with specific measures to control the transmission of the disease. The consideration for resusceptibility with time delay is an important highlight in this paper as it has rarely been considered in the SEIR models reported thus far \cite{Kucharski2020, Prem2020, Fang2020}. Other than that, we also included information such as demographic details for the ageing population, who seem to experience a higher fatality rate due to COVID-19 \cite{covid}.

This paper is organised as follows: Section \ref{Model} introduces the mathematical model; Section \ref{Stability} presents the theoretical proofs for the positivity, boundedness, and stability of the model, as well as describes the model with time delay factors for the control action and potential resusceptibility; Section \ref{Case} verifies the proposed model and also provides some extensive results and discussions on predictions using the model through case studies based on real-world data; and Section \ref{Conclusion} concludes the paper. \ref{Simulation} presents the design of the simulation package using the MATLAB/Simulink environment.

\section{Modelling COVID-19 Using Modified SEIRS}\label{Model}
First, let's consider the modified SEIRS model system below,
\begin{align}
  \frac{dS(t)}{dt} &= \Lambda - \mu S(t) - \beta (1 - \sigma)S(t)I(t) + R_s(t), \label{eq:dSnotau} \\
  \frac{dE(t)}{dt} &= \beta (1 - \sigma)S(t)I(t) - (\mu + \alpha)E(t), \label{eq:dEnotau} \\
  \frac{dI(t)}{dt} &= \alpha E(t) - (\mu + \gamma)I(t) - \Delta I(t) \label{eq:dInotau} \\
  \frac{dR(t)}{dt} &= \gamma I(t) - \mu R(t) - R_s(t), \label{eq:dRnotau} \\
  \frac{dD(t)}{dt} &= \Delta I(t), \label{eq:dDnotau}
\end{align}
where $S(t), E(t), I(t), R(t),$ and $D(t)$ represent the susceptible, exposed, infected, recovered/removed, and deaths compartments, respectively. It is established that $S(t) + E(t) + I(t) + R(t) + D(t) = N(t)$, where $N(t)$ is the total stock population. The constant $\Lambda$ is the birth rate in the overall population and $\mu$ is the death rate due to conditions other than COVID-19. The parameter $\beta$ is the rate of transmission per S-I contact, $\alpha$ is the rate of which an exposed person becomes infected, and $\gamma$ is the recovery rate. Therefore, the incubation and recovery times are $\tau_{inc} = \frac{1}{\alpha}$ and $\tau_{rec} = \frac{1}{\gamma}$, respectively. The constant $\sigma$ is the efficiency of the control action to reduce the infection rate and to flatten the curve. It has a direct effect on the basic reproduction number $R_0$, which will be explained further later in this paper.

The parameter $\Delta = \delta \left[(1 - \kappa_{old})N_{old} + (1 - \kappa)(1 - N_{old}) \right]$ comes into effect in the worst case scenario where the patient does not recover from the virus. We model the fatality rate with influence of the fraction of elderly population (above 65 years of age) within the community $N_{old}$, where the percentages of nonelderly and elderly who recovered are $\kappa$ and $\kappa_{old}$, respectively. The time spent hospitalised or infected in fatal cases is $\tau_{hosp} = \frac{1}{\delta}$. In this paper, we establish that $\tau_{hosp} = \tau_{rec}$, assuming that patients spend the same amount of time hospitalised or infected, whether they recover from the virus or not.

The function $R_s(t)$ represents the resusceptible stock, which can be computed from the recovered population using
\begin{equation}
  R_s(t) = \xi R(t), \label{eq:Rs1}
\end{equation}
where $\xi$ is the percentage of the recovered population who are resusceptible to the virus. The number of actual recovered/removed cases with permanent immunity can then be written using
\begin{equation}
  R_c(t) = R(t) - R_s(t). \label{eq:Rc1}
\end{equation}
In an ideal situation, population who recovered develop permanent immunity against the virus, i.e. $\xi = 0$. As a result, (\ref{eq:Rs1}) becomes $R_s(t) = 0$ and subsequently, (\ref{eq:Rc1}) becomes $R_c(t) = R(t)$.

\section{Positivity, Boundedness, and Equilibrium Analysis of the Model}\label{Stability}
\subsection{Positivity of the Solutions}
\begin{lemma}\label{lemma:pos}
  The solutions to all subpopulations $(S(t),E(t),I(t),R(t),D(t))$ in the system (\ref{eq:dSnotau})--(\ref{eq:dDnotau}) are nonnegative for all time $t \geq 0$ given any finite nonnegative initial conditions of $(S(0) \geq 0, ~E(0) \geq 0, ~I(0) \geq 0, ~R(0) \geq 0, ~D(0) \geq 0)$.
\end{lemma}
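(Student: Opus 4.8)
The plan is to combine local existence and uniqueness of solutions with a forward-invariance argument for the nonnegative orthant. Because the right-hand sides of (\ref{eq:dSnotau})--(\ref{eq:dDnotau}) are polynomial in the state (recall $R_s = \xi R$ from (\ref{eq:Rs1})), they are locally Lipschitz, so the Picard--Lindel\"of theorem yields a unique solution on some maximal interval $[0,T)$. It then suffices to show that, starting from nonnegative data, this solution can never leave the region $\{S,E,I,R,D \ge 0\}$. Throughout I would fix the natural standing sign assumptions $\Lambda,\mu,\alpha,\gamma,\Delta,\xi \ge 0$ and $0 \le \sigma \le 1$, since these are exactly what make the relevant terms signed.

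First I would verify the subtangential (boundary) condition: on each face of the orthant the vector field points inward or is tangent to it. Concretely, setting one compartment to zero while the others remain nonnegative gives
\begin{align*}
  \dot{S}|_{S=0} &= \Lambda + \xi R \ge 0, & \dot{E}|_{E=0} &= \beta(1-\sigma)SI \ge 0,\\
  \dot{I}|_{I=0} &= \alpha E \ge 0, & \dot{R}|_{R=0} &= \gamma I \ge 0,\\
  \dot{D}|_{D=0} &= \Delta I \ge 0.
\end{align*}
An equivalent and more self-contained way to encode this is through integrating factors. For example, writing $P(s) = \mu + \beta(1-\sigma)I(s)$, equation (\ref{eq:dSnotau}) integrates to
\[
  S(t) = e^{-\int_0^t P(r)\,dr}\left[\, S(0) + \int_0^t e^{\int_0^s P(r)\,dr}\bigl(\Lambda + \xi R(s)\bigr)\,ds \,\right],
\]
and analogous representations hold for $E$, $I$, $R$, and $D$ with strictly positive exponential prefactors. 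Each such representation shows that a given compartment remains nonnegative provided the states appearing in its forcing integrand are nonnegative on $[0,t]$: $S$ needs $R\ge 0$, $E$ needs $S,I\ge 0$, $I$ needs $E\ge 0$, and $R,D$ need $I\ge 0$.

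The hard part is that these dependencies are genuinely cyclic (e.g.\ $E$ and $I$ feed each other, and $S \to E \to I \to R \to S$), so no single compartment can be shown nonnegative in isolation and the equations must be handled simultaneously. I would close the argument with a first-exit-time contradiction: suppose positivity fails and let $t_1 \in (0,T)$ be the earliest time at which some compartment reaches zero, so that all five compartments are nonnegative on $[0,t_1]$. On that interval every forcing integrand above is nonnegative, so the integral representations force each compartment, in particular the one vanishing at $t_1$, to stay nonnegative just beyond $t_1$, contradicting the minimality of $t_1$. The genuinely delicate point to watch is the degenerate case in which the boundary derivative itself vanishes, for instance $I(t_1)=E(t_1)=0$, giving $\dot{I}(t_1)=\alpha E(t_1)=0$: here the sign of the first derivative alone is inconclusive, and one must rely on the integral representation (or a Gronwall-type estimate) rather than the tangency inequality to rule out a crossing.
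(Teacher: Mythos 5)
Your proposal is correct in substance and, at its core, rests on the same computation as the paper: your five boundary evaluations ($\dot S|_{S=0}=\Lambda+\xi R\ge 0$, $\dot E|_{E=0}=\beta(1-\sigma)SI\ge 0$, and so on) are \emph{exactly} the paper's proof, which checks by inspection that no compartment can have a negative derivative at an instant when it vanishes while the others are nonnegative, and then concludes invariance directly from that tangency property. Where you genuinely diverge is in everything wrapped around that check: you first secure existence and uniqueness via Picard--Lindel\"of (the right-hand side is polynomial, hence locally Lipschitz), recast each equation through an integrating factor into a variation-of-constants form with strictly positive prefactor, and attempt to close with a first-exit-time contradiction. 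This buys real rigor that the paper forgoes: the paper's inspection argument is silently inconclusive in precisely the degenerate situation you flag (e.g.\ $E(t_1)=I(t_1)=0$ gives $\dot I(t_1)=0$, so the sign of the derivative decides nothing), and the paper never explains why ``nonnegative derivative whenever a component vanishes'' forbids a crossing at all --- that implication is the content of the quasi-positivity/invariance theorem, of which your argument is a self-contained sketch. The paper's version is the standard shorthand of the epidemiology literature; yours identifies and fills the actual logical content behind it, under the same standing sign assumptions the paper makes implicitly.

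One soft spot in your closing step deserves attention: as literally stated, the first-exit-time argument is circular. Nonnegativity of the forcing integrands is known only on $[0,t_1]$, so the integral representations control the vanishing compartment only \emph{up to} $t_1$, not ``just beyond'' it --- on $(t_1,t_1+\delta)$ several compartments could go negative simultaneously and feed each other's forcing terms, which is the cyclic dependency you yourself identify. You already name the right repair; make one of them explicit. Either (i) run Gronwall on the sum of negative parts $n(t)=S^-(t)+E^-(t)+I^-(t)+R^-(t)+D^-(t)$: the boundary inequalities together with the local Lipschitz bound yield $n(t)\le C\int_{t_1}^{t} n(s)\,ds$ on compact subintervals of the maximal interval, and $n(t_1)=0$ forces $n\equiv 0$; or (ii) use uniqueness on the degenerate faces, e.g.\ if $E(t_1)=I(t_1)=0$ then $(\tilde S,0,0,\tilde R,\tilde D)$ with $\dot{\tilde S}=\Lambda-\mu\tilde S+\xi\tilde R$, $\dot{\tilde R}=-(\mu+\xi)\tilde R$, $\dot{\tilde D}=0$ is a solution of the full system (\ref{eq:dSnotau})--(\ref{eq:dDnotau}), so Picard--Lindel\"of forces $E\equiv I\equiv 0$ thereafter; or (iii) perturb $\Lambda\mapsto\Lambda+\varepsilon$ in each equation to obtain strict positivity and let $\varepsilon\to 0$. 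With any of these inserted, your proof is complete and strictly stronger than the one in the paper.
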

\begin{proof}
  Firstly, it is established that all subpopulations $(S(t),E(t),I(t),R(t),D(t))$ defined by the system (\ref{eq:dSnotau})--(\ref{eq:dDnotau}) are continuously differentiable. As such, if all subpopulations have nonnegative initial conditions, and that if any of the subpopulations is zero at time $t = t_i \geq 0$, its derivative is nonnegative by inspection. Assume that $S(0) \geq 0$, $S(t_1) = 0$, and $R_s(t_1) \geq 0$ at time instant $t = t_1$. Then, we can rewrite (\ref{eq:dSnotau}) using
  $$\frac{dS(t_1)}{dt} = \Lambda + R_s(t_1) \geq 0,$$
  where we can establish that $S(t_1^+) \geq 0$ and hence, $S(t)$ is nonnegative for all time $t \geq 0$. Next, assume that $E(0) \geq 0$, $E(t_2) = 0, S(t_2) \geq 0,$ and $I(t_2) \geq 0$ at time instant $t = t_2$. We can rewrite (\ref{eq:dEnotau}) using
  $$\frac{dE(t_2)}{dt} = \beta(1 - \sigma)S(t_2)I(t_2) \geq 0,$$
  so that $E(t_2^+) \geq 0$ and hence, $E(t)$ is nonnegative for all time $t \geq 0$. Assume that $I(0) \geq 0$, $I(t_3) = 0$, and $E(t_3) \geq 0$ at time instant $t = t_3$. Equation (\ref{eq:dInotau}) then becomes
  $$\frac{dI(t_3)}{dt} = \alpha E(t_3) \geq 0,$$
  where we can establish that $I(t_3^+) \geq 0$ and hence, $I(t)$ is nonnegative for all time $t \geq 0$. Assume that $R(0) \geq 0$, $R(t_4) = 0$, and $I(t_4) \geq 0$ at time instant $t = t_4$. Equation (\ref{eq:dRnotau}) can be rewritten using
  $$\frac{dR(t_4)}{dt} = \gamma I(t_4) \geq 0,$$
  so that $R(t_4^+) \geq 0$ and hence, $R(t)$ is nonnegative for all time $t \geq 0$. Finally, assume that $D(0) \geq 0$, $D(t_5) = 0$, and $I(t_5) \geq 0$ at time instant $t = t_5$. We can then rewrite (\ref{eq:dDnotau}) using
  $$\frac{dD(t_5)}{dt} = \Delta I(t_5) \geq 0,$$
  where we can establish that $D(t_5^+) \geq 0$ and hence, $D(t)$ is nonnegative for all time $t \geq 0$.

  It can be seen that since none of the subpopulations would have a negative derivative at any time instant of $t = t_i$ when all other subpopulations are nonnegative, then it can be concluded that all subpopulations are nonnegative for all time $t \geq 0$. As a result, given that $N(t) = S(t) + E(t) + I(t) + R(t) + D(t)$, the stock population $N(t)$ is also nonnegative for all time $t \geq 0$. Hence, the proof is complete.
\end{proof}

\subsection{Boundedness of the Solutions}
\begin{lemma}
  The stock population $N(t)$ is finitely upperbounded for any nonnegative initial conditions.
\end{lemma}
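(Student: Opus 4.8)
The plan is to collapse the five-dimensional system into a single scalar differential inequality for $N(t)$ and then close the estimate with a comparison argument. First I would add the five equations \eqref{eq:dSnotau}--\eqref{eq:dDnotau} and exploit the fact that every internal flux appears twice with opposite signs: the infection term $\beta(1-\sigma)SI$ cancels between $\dot S$ and $\dot E$, the progression term $\alpha E$ cancels between $\dot E$ and $\dot I$, the recovery term $\gamma I$ cancels between $\dot I$ and $\dot R$, the resusceptibility flux $R_s$ cancels between $\dot S$ and $\dot R$, and the fatality flux $\Delta I$ cancels between $\dot I$ and $\dot D$. What survives are only the external birth and natural-death terms, giving
\begin{equation*}
  \frac{dN(t)}{dt} = \Lambda - \mu\bigl(S(t)+E(t)+I(t)+R(t)\bigr).
\end{equation*}

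Next I would invoke Lemma~\ref{lemma:pos}: since every subpopulation is nonnegative for all $t\ge 0$, the bracketed living subtotal is nonnegative, so $\frac{dN}{dt} \le \Lambda$. This already yields $N(t) \le N(0) + \Lambda t$, which is finite at every finite time and suffices if ``finitely upperbounded'' is read over a finite horizon.

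The hard part, and the genuine obstacle, is upgrading this to a bound uniform in $t$. The textbook route would be to write $\frac{dN}{dt} \le \Lambda - \mu N$ and conclude $N(t) \le \max\{N(0), \Lambda/\mu\}$ via the comparison lemma; but that inequality fails here, because the death compartment $D$ carries no $-\mu D$ outflow, so the quantity decaying at rate $\mu$ is $S+E+I+R = N - D$ rather than $N$ itself, and the nonnegativity of $D$ pushes the estimate the wrong way. I would therefore instead bound the living subtotal $M(t) := S(t)+E(t)+I(t)+R(t)$ directly: summing \eqref{eq:dSnotau}--\eqref{eq:dRnotau} gives $\frac{dM}{dt} = \Lambda - \mu M - \Delta I \le \Lambda - \mu M$, whence the comparison lemma delivers the clean uniform bound $M(t) \le \max\{M(0), \Lambda/\mu\}$. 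This is the estimate I expect the argument to rest on; whether it transfers to $N = M + D$ for all time depends on the accumulated deaths $D(t) = D(0) + \int_0^t \Delta I(s)\,ds$, which remain finite precisely when the infection is eventually cleared, so I would either restrict attention to the living population or make that clearance hypothesis explicit.
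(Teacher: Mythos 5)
Your proposal is correct and, in fact, more careful than the paper's own argument, which it matches only in the first step. The paper performs the same summation and cancellation to reach $\frac{dN}{dt} = \Lambda - \mu\left(N(t)-D(t)\right)$, but then invokes the heuristic assumption $N(t) \gg D(t)$ to write $\frac{dN}{dt} \approx \Lambda - \mu N(t)$, asserts the differential inequality $\frac{dN}{dt} \leq \Lambda - \mu N(t)$, and integrates it to obtain $N(t) \leq N(0)e^{-\mu t} + \frac{\Lambda}{\mu}\left(1-e^{-\mu t}\right) \leq \max\left(N(0), \frac{\Lambda}{\mu}\right)$. You have identified precisely the weak point of that route: since $D(t) \geq 0$ by Lemma~\ref{lemma:pos}, the exact identity gives $\frac{dN}{dt} = \Lambda - \mu N(t) + \mu D(t) \geq \Lambda - \mu N(t)$, so the paper's inequality points the wrong way and is valid only insofar as the $N \gg D$ (effectively $D \approx 0$) approximation is imposed. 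Your substitute --- bounding the living subtotal $M(t) = S(t)+E(t)+I(t)+R(t)$, for which $\frac{dM}{dt} = \Lambda - \mu M(t) - \Delta I(t) \leq \Lambda - \mu M(t)$ holds rigorously, so the comparison lemma gives $M(t) \leq \max\left(M(0), \frac{\Lambda}{\mu}\right)$ --- is the clean version of the same comparison argument, and your caveat about transferring the bound to $N = M + D$ is substantive rather than pedantic: in the endemic regime of Lemma~\ref{lemma2} one has $I(t) \to I^* > 0$, so $D(t) = D(0) + \int_0^t \Delta I(s)\,ds$ grows linearly and $N(t)$ is \emph{not} uniformly bounded in time; a uniform bound on $N$ genuinely requires $\int_0^\infty I(s)\,ds < \infty$, i.e.\ the eventual-clearance hypothesis you flag. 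In short, the paper's approximation buys the lemma in its stated form at the cost of rigor, while your route buys statements that are true without extra hypotheses (a uniform bound on the living population and a linear-in-$t$ bound on $N$), at the cost of either weakening the conclusion or making the clearance assumption explicit.
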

\begin{proof}
  The dynamics of the stock population can be written using
  \begin{align}
    \frac{dN(t)}{dt} &= \frac{dS(t)}{dt} + \frac{dE(t)}{dt} + \frac{dI(t)}{dt} + \frac{dR(t)}{dt} + \frac{dD(t)}{dt}, \notag \\
    &= \Lambda - \mu(S(t) + E(t) + I(t) + R(t)), \notag \\
    &= \Lambda - \mu(N(t) - D(t)). \label{eq:dNdt1}
  \end{align}
  Assuming that $N(t) >> D(t)$, and since Lemma \ref{lemma:pos} has established that all subpopulations are nonnegative and given that all parameters are assumed to be positive, then (\ref{eq:dNdt1}) becomes
  \begin{equation}
    \frac{dN(t)}{dt} \approx \Lambda -\mu N(t).
  \end{equation}
  We can then deduce that
  \begin{equation}
    \frac{dN(t)}{dt} \leq \Lambda - \mu N(t), \label{eq:dNdt}
  \end{equation}
  where an integration of the inequality (\ref{eq:dNdt}) yields
  $$N(t) \leq N(0)e^{-\mu t} + \frac{\Lambda}{\mu}\left(1 - e^{-\mu t} \right) \leq max\left(N(0), \frac{\Lambda}{\mu}\right),$$
  for all time $t \geq 0$. As a result, the stock population is finitely upperbounded and hence, the proof is complete.
\end{proof}

\subsection{Disease-free Equilibrium}\label{section:DFE}
\begin{lemma}\label{lemma1}
  The disease-free equilibrium $E_{DFE}$ is locally asymptotically stable if the basic reproduction number $R_0 < 1$.
\end{lemma}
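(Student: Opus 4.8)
The plan is to linearise the system about the disease-free equilibrium (DFE) and apply the Routh--Hurwitz criterion to the resulting Jacobian. First I would locate the DFE by setting $E = I = 0$ (the defining feature of a disease-free state) in equations (\ref{eq:dSnotau})--(\ref{eq:dDnotau}), together with $R_s = \xi R$ from (\ref{eq:Rs1}). The $R$-equation then forces $R = 0$, the $S$-equation gives $S = \Lambda/\mu$, and $D$ is left as an arbitrary constant, so that $E_{DFE} = \left(\Lambda/\mu,\,0,\,0,\,0,\,D^{*}\right)$.

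Next I would compute the Jacobian $J$ of the vector field at $E_{DFE}$ and exploit its block-triangular structure. Because $D$ appears in no other equation, its column in $J$ vanishes and it contributes a zero eigenvalue; since $D$ is a pure accumulator that does not feed back into the epidemic dynamics, I would pose the stability question on the reduced $(S,E,I,R)$ subsystem. In that reduced Jacobian the $S$-row immediately yields the eigenvalue $-\mu$ and the $R$-row yields $-(\mu+\xi)$, both manifestly negative, leaving only the $2\times 2$ block that governs the infected compartments $(E,I)$,
\[
  J_2 = \begin{pmatrix}
    -(\mu+\alpha) & \beta(1-\sigma)\frac{\Lambda}{\mu} \\
    \alpha & -(\mu+\gamma+\Delta)
  \end{pmatrix},
\]
whose spectrum decides local asymptotic stability.

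Applying the trace--determinant test to $J_2$, the trace is $-(2\mu+\alpha+\gamma+\Delta)<0$ unconditionally, so the only binding requirement is positivity of the determinant, namely $(\mu+\alpha)(\mu+\gamma+\Delta) - \alpha\beta(1-\sigma)\Lambda/\mu > 0$. Rearranging shows this is exactly
\[
  R_0 = \frac{\alpha\,\beta(1-\sigma)\,\Lambda}{\mu(\mu+\alpha)(\mu+\gamma+\Delta)} < 1 ,
\]
so $R_0<1$ is equivalent to both eigenvalues of $J_2$ having negative real part, which establishes the claim. As an independent cross-check I would re-derive the same threshold through the next-generation matrix: taking $F$ as the new-infection matrix and $V$ as the transition matrix on $(E,I)$, the spectral radius $\rho(FV^{-1})$ coincides with the $R_0$ above, and the van den Driessche--Watmough theorem then delivers local asymptotic stability for $R_0<1$ directly.

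The step I expect to require the most care is the handling of the zero eigenvalue contributed by the $D$ compartment: the full five-dimensional system is only marginally, not asymptotically, stable, so the statement must be interpreted on the epidemiologically active variables, and I would make this decoupling explicit rather than gloss over it. By contrast, peeling off the $-\mu$ and $-(\mu+\xi)$ eigenvalues and reducing to $J_2$ is routine once the block structure is observed, and the final identification of the determinant condition with $R_0<1$ is purely algebraic.
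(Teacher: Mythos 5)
Your proposal is correct and follows essentially the same route as the paper: both linearise at $E_{DFE}=\left(\frac{\Lambda}{\mu},0,0,0\right)$, peel off the eigenvalues $-\mu$ and $-(\mu+\xi)$ from the triangular structure, and reduce stability to the $2\times 2$ infected block, where your trace--determinant test is precisely the paper's Routh--Hurwitz condition $a_2>0$, $a_3>0$ on the quadratic $f_1(\lambda)$, yielding $R_0<1$. Your explicit handling of the decoupled $D$ compartment (its zero eigenvalue forces the stability statement onto the $(S,E,I,R)$ subsystem) and the next-generation-matrix cross-check are careful refinements of what the paper dispatches by simply removing equation (\ref{eq:dDnotau}) ``without loss of generality,'' but they do not alter the argument.
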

\begin{proof}
  The disease-free equilibrium can be obtained by equating equations (\ref{eq:dSnotau})–(\ref{eq:dRnotau}) to zero, hence satisfying
  \begin{align}
    \Lambda - \mu S(t) - \beta (1 - \sigma)S(t)I(t) + R_s(t) &= 0, \\
    \beta (1 - \sigma)S(t)I(t) - (\mu + \alpha)E(t) &= 0, \\
    \alpha E(t) - (\mu + \gamma)I(t) - \Delta I(t) &= 0, \\
    \gamma I(t) - \mu R(t) - R_s(t) &= 0,
  \end{align}
  of which the disease-free equilibrium is given by $E_{DFE} = \left(\frac{\Lambda}{\mu}, 0, 0, 0 \right)$. Equation (\ref{eq:dDnotau}) can be removed from this analysis without loss of generality as other equations do not depend on it. It can then be shown that the Jacobian for (\ref{eq:dSnotau})–(\ref{eq:dRnotau}) at $E_{DFE}$ is written using
  \begin{equation}
    J_{DFE} = \left[ \begin{array}{cccc}
    -\mu & 0 & -\frac{\beta \Lambda (1 - \sigma)}{\mu} & \xi \\
    0 & -(\mu + \alpha) & \frac{\beta \Lambda (1 - \sigma)}{\mu} & 0 \\
    0 & \alpha & -(\mu + \gamma + \Delta) & 0 \\
    0 & 0 & \gamma & -(\mu + \xi)
    \end{array} \right].
  \end{equation}

  The characteristic equation can subsequently be obtained by subtracting $\lambda$ from the diagonal elements and then computing the determinant, which yields
  \begin{equation}
    (-\mu -\lambda)(-(\mu+\xi) -\lambda)f_1(\lambda) = 0, \label{eq:det1}
  \end{equation}
  where
  \begin{equation}
    f_1(\lambda) = (-(\mu+\gamma+\Delta) -\lambda)(-(\mu+\alpha) -\lambda) -\frac{\alpha \beta \Lambda (1 -\sigma)}{\mu}. \label{eq:f1}
  \end{equation}

  The first two eigenvalues in (\ref{eq:det1}) can be easily computed to be $\lambda_1 = -\mu, ~\lambda_2 = -\mu -\xi$, and that they are negative. As for the remaining eigenvalues, they can be found by solving the quadratic $f_1(\lambda)$ in (\ref{eq:f1}), which can be expended and represented using
  \begin{equation}
    a_1\lambda^2 + a_2\lambda + a_3 = 0, \notag
  \end{equation}
  where
  \begin{align}
    a_1 &= 1, \notag \\
    a_2 &= 2\mu + \gamma + \alpha + \Delta, \notag \\
    a_3 &= (\mu + \gamma + \Delta)(\mu + \alpha) - \frac{\alpha \beta \Lambda(1-\sigma)}{\mu}. \notag
  \end{align}

  For the disease-free equilibrium to be stable, i.e. all eigenvalues are negative, it is required that
  \begin{align}
    \frac{\alpha \beta \Lambda(1-\sigma)}{\mu} - (\mu + \gamma + \Delta)(\mu + \alpha) &< 0, \notag \\
    \frac{\alpha \beta \Lambda(1-\sigma)}{\mu (\mu + \gamma + \Delta)(\mu + \alpha)} &< 1 \notag.
  \end{align}

  As such, the basic reproduction number with the control action is defined using
  \begin{equation}
    R_0 = \frac{\alpha \beta \Lambda (1 - \sigma)}{\mu(\mu + \alpha)(\mu + \gamma + \Delta)}, \label{eq:R01}
  \end{equation}
  where for a disease-free system that is locally asymptotically stable, we need to ensure that $R_0 < 1$ while an unstable $E_{DFE}$ would translate to $R_0 > 1$. Hence, the proof is complete.
\end{proof}

Should there be no control action taken, i.e. $\sigma = 0$, then the basic reproduction number in (\ref{eq:R01}) becomes
\begin{equation}
  R_0 = \frac{\alpha \beta \Lambda}{\mu(\mu + \alpha)(\mu + \gamma + \Delta)}, \nonumber
\end{equation}
which is similar to other models found in the literature \cite{vanden2017,Keeling,vanden2008}. See \cite{Stewart2020} and the references within for a brief study on using control theory to fight COVID-19.

\subsection{Endemic Equilibrium}
\begin{lemma}\label{lemma2}
  The endemic equilibrium $E_{EE}$ is locally asymptotically stable if the basic reproduction number $R_0 > 1$.
\end{lemma}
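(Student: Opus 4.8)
The plan is to mirror the disease-free analysis of Lemma~\ref{lemma1}: locate the endemic equilibrium explicitly, linearise there, and certify that every eigenvalue of the Jacobian has negative real part when $R_0 > 1$ via the Routh--Hurwitz criterion. As before I would discard (\ref{eq:dDnotau}), since $D$ decouples, and substitute $R_s = \xi R$ so that (\ref{eq:dRnotau}) reads $\dot R = \gamma I - (\mu+\xi)R$ and (\ref{eq:dSnotau}) reads $\dot S = \Lambda - \mu S - \beta(1-\sigma)SI + \xi R$.

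First I would solve for $E_{EE} = (S^*,E^*,I^*,R^*)$ with $I^*>0$. Writing $b = \beta(1-\sigma)$, the $I$- and $E$-equations give $E^* = (\mu+\gamma+\Delta)I^*/\alpha$ and the key identity $bS^* = (\mu+\alpha)(\mu+\gamma+\Delta)/\alpha$, i.e. $S^* = (\Lambda/\mu)/R_0$; the $R$-equation gives $R^* = \gamma I^*/(\mu+\xi)$; and the $S$-equation then fixes
\[
I^* = \frac{\Lambda(1-1/R_0)}{(\mu+\alpha)(\mu+\gamma+\Delta)/\alpha - \xi\gamma/(\mu+\xi)}.
\]
The numerator is positive exactly when $R_0>1$, while the denominator is positive because $(\mu+\alpha)(\mu+\gamma+\Delta)(\mu+\xi) > \alpha\gamma\xi$ term-by-term; hence a positive endemic equilibrium exists iff $R_0>1$, which is the natural counterpart of Lemma~\ref{lemma1}.

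Next I would evaluate the Jacobian of (\ref{eq:dSnotau})--(\ref{eq:dRnotau}) at $E_{EE}$, using the identity $bS^* = (\mu+\alpha)(\mu+\gamma+\Delta)/\alpha$ to collapse the otherwise unwieldy $(E,I)$ block. Expanding $\det(J-\lambda I)$ along the first column yields a quartic $\lambda^4 + c_1\lambda^3 + c_2\lambda^2 + c_3\lambda + c_4$ whose coefficients, after the simplification $\alpha bS^* = (\mu+\alpha)(\mu+\gamma+\Delta)$, are sums of products of the positive quantities $\mu+bI^*$, $\mu+\xi$, $\mu+\alpha$, $\mu+\gamma+\Delta$ together with $bI^*$. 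In particular $c_1,c_2,c_3>0$ are immediate, and $c_4 = bI^*\big[(\mu+\alpha)(\mu+\gamma+\Delta)(\mu+\xi) - \alpha\gamma\xi\big]>0$ by the same domination used for the denominator of $I^*$ (and it vanishes precisely at $R_0=1$, as it must).

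The one genuinely substantive step is the remaining Routh--Hurwitz condition for a quartic, $c_1c_2c_3 > c_3^2 + c_1^2 c_4$ (equivalently $c_3(c_1c_2-c_3) > c_1^2 c_4$). This is where I expect the difficulty to concentrate: the coupling through $\xi$ that links $S$ and $R$ prevents the Jacobian from being block-triangular, so the inequality does not factor into lower-order conditions and must be checked by substituting the explicit coefficients and bounding the cross terms. I would carry this out by treating $x = bI^*>0$ as the bifurcation variable and verifying the inequality for all $x>0$ under the standing positivity of the parameters, exploiting that each $c_i$ is a polynomial in $x$ with positive coefficients. A lighter alternative, sufficient near threshold, is to invoke the transcritical bifurcation at $R_0=1$: since Lemma~\ref{lemma1} shows $E_{DFE}$ loses stability exactly there, the exchange-of-stability principle transfers local asymptotic stability to the branch $E_{EE}$ that emerges for $R_0>1$, at least for $R_0$ slightly above $1$.
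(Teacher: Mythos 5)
Your setup is sound and in places sharper than the paper's own argument: your endemic equilibrium agrees exactly with the paper's (your expression for $I^*$ is algebraically identical to the paper's $I^* = \mu\Lambda(\mu+\xi)(R_0-1)/\left(\beta\Lambda(1-\sigma)(\mu+\xi)-\mu\gamma\xi R_0\right)$ after clearing denominators via $bS^* = (\mu+\alpha)(\mu+\gamma+\Delta)/\alpha$), your existence-iff-$R_0>1$ discussion matches Lemma~\ref{lemma2}'s preamble, and your formula $c_4 = bI^*\left[(\mu+\alpha)(\mu+\gamma+\Delta)(\mu+\xi)-\alpha\gamma\xi\right]$ is a correct exact computation of $\det J_{EE}$. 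Where you diverge from the paper is the final step. The paper does not work with the exact quartic at all: it writes the characteristic equation in the form $(-(\mu+\xi)-\lambda)(-\lambda^3+b_1\lambda^2-b_2\lambda+b_3)-\alpha\gamma\kappa\xi I^*=0$ with $\kappa=\beta(1-\sigma)$, then \emph{assumes} the coupling term $\alpha\gamma\kappa\xi I^*$ is negligible so that the quartic factors into $\lambda_1=-\mu-\xi$ and a cubic, and concludes by asserting the sign pattern $b_1<0$, $b_2>0$, $b_3<0$ holds iff $R_0>1$. That neglected term is exactly the $\xi$-loop you correctly identify as the obstruction to block-triangularity, and sign conditions on a cubic's coefficients are necessary but not sufficient for stability (the third Routh--Hurwitz condition for the cubic is never checked in the paper), so the published proof is itself an approximation argument rather than a complete one.

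The genuine gap in your proposal, however, sits at precisely the step you flag: once $c_1,c_2,c_3,c_4>0$ are known, the inequality $c_1c_2c_3 > c_3^2 + c_1^2c_4$ \emph{is} the lemma, and ``substitute the explicit coefficients and bound the cross terms'' is a plan, not an argument. Your supporting observation --- that each $c_i$ is a polynomial in $x=bI^*$ with positive coefficients --- does not transfer to the combination $c_1c_2c_3 - c_3^2 - c_1^2c_4$, whose sign requires genuine estimates (this is the known, tedious Routh--Hurwitz verification for SEIRS with bilinear incidence), and nothing in the proposal performs them. The fallback via exchange of stability at the transcritical point $R_0=1$ is legitimate but proves strictly less than the statement: local asymptotic stability of $E_{EE}$ only for $R_0$ slightly above $1$, not for all $R_0>1$. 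So as written, neither branch of your plan closes the lemma in full generality --- though in fairness the paper's own proof does not either, since it rests on the unjustified smallness assumption on $\alpha\gamma\kappa\xi I^*$ and on incomplete sign conditions for the resulting cubic.
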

\begin{proof}
  Let's assume that the system (\ref{eq:dSnotau})–(\ref{eq:dRnotau}), which other than the disease-free equilibrium has a unique equilibrium at $E_{EE} = (S^*, E^*, I^*, R^*)$ such that
  \begin{align}
    S^* &= \frac{\Lambda}{\mu R_0}, \notag \\
    E^* &= \frac{\mu \Lambda(\mu + \gamma + \Delta)(\mu + \xi)(R_0 - 1)}{\alpha \beta \Lambda (1 - \sigma)(\mu + \xi) - \alpha \mu \gamma \xi R_0}, \notag \\
    I^* &= \frac{\mu \Lambda(\mu + \xi)(R_0 - 1)}{\beta
    \Lambda (1 - \sigma)(\mu + \xi) - \mu \gamma \xi R_0}, \notag \\
    R^* &= \frac{\mu \gamma \Lambda (R_0 - 1)}{\beta
    \Lambda (1 - \sigma)(\mu + \xi) - \mu \gamma \xi R_0}. \notag
  \end{align}

  It can be seen that the model has no positive endemic equilibrium if $R_0 < 1$ for $E^*, I^*,$ and $R^*$ would be negative, which indicate an unrealistic biological system. If $R_0 = 1$, then we would have the disease-free equilibrium $E_{DFE}$ discussed earlier in Section \ref{section:DFE}. Hence, for a positive endemic equilibrium system, we would require that $R_0 > 1$.

  For the stability analysis of the endemic equilibrium, we use the Jacobian for (\ref{eq:dSnotau})–(\ref{eq:dRnotau}) at $E_{EE}$, which is written using
  \begin{equation}
    \hspace{-2mm}J_{EE} = \left[\hspace{-2mm} \begin{array}{cccc}
    -(\mu +\kappa I^*) & 0 & -\kappa S^* & \xi \\
    \kappa I^* & -(\mu + \alpha) & \kappa S^* & 0 \\
    0 & \alpha & -(\mu + \gamma + \Delta) & 0 \\
    0 & 0 & \gamma & -(\mu + \xi)
    \end{array} \hspace{-2mm}\right],
  \end{equation}
  where $\kappa = \beta (1 - \sigma)$. The characteristic equation can subsequently be obtained by subtracting $\lambda$ from the diagonal elements and then computing the determinant, which yields
  \begin{equation}
    (-(\mu+\xi)-\lambda)(-\lambda^3 + b_1 \lambda^2 - b_2 \lambda + b_3) - \alpha \gamma \kappa \xi I^* = 0, \label{eq:chareq2}
  \end{equation}
  where
  \begin{align}
    b_1 &= -(\mu + \kappa I^*) -(\mu + \alpha) -(\mu + \gamma + \Delta), \notag \\
    &= -3\mu -\alpha -\gamma -\Delta -\kappa I^*, \notag \\
    \begin{split}
      b_2 &= (\mu + \kappa I^*)(\mu + \alpha) + (\mu + \kappa I^*)(\mu + \gamma + \Delta) \\
      &\qquad + (\mu + \alpha)(\mu + \gamma + \Delta)- \alpha \kappa S^*,
      \end{split} \notag \\
      \begin{split}
        &= 3\mu^2 + 2\mu \alpha + 2\mu \gamma  + 2 \mu \Delta + \alpha \gamma + \gamma \Delta  \\
        &\qquad + \kappa I^* (2 \mu + \alpha + \gamma + \Delta) - \alpha \kappa S^*,
        \end{split} \notag \\
        b_3 &= -(\mu + \kappa I^*)((\mu + \alpha)(\mu + \gamma + \Delta) - \alpha \kappa S^*) - \alpha \kappa^2 S^* I^*, \notag \\
        &= -(\mu +\kappa I^*)(\mu^2 + \mu(\alpha + \gamma + \Delta) + \alpha (\gamma + \Delta)) + \mu \alpha \kappa S^*. \notag
      \end{align}
      Assuming that ~~$\alpha \gamma \kappa \xi I^* << (-(\mu+\xi)-\lambda)(-\lambda^3 + b_1 \lambda^2 - b_2 \lambda + b_3),$
      then the characteristic equation in (\ref{eq:chareq2}) becomes
      \begin{equation}
        (-(\mu+\xi)-\lambda)(-\lambda^3 + b_1 \lambda^2 - b_2 \lambda + b_3) \approx 0.
      \end{equation}

      It is obvious that the first eigenvalue is $\lambda_1 = -\mu -\xi$. It can also be shown that in order for the remaining eigenvalues to be negative such that the endemic equilibrium is locally asymptotically stable, i.e. $b_1 < 0, ~b_2 > 0,$ and $b_3 < 0,$ we would require that $R_0 > 1$ while an unstable $E_{EE}$ would translate to $R_0 < 1$. Hence, the proof is complete.
    \end{proof}

    For the remaining of this paper where we verify the model and perform predictions using real-world data in Section \ref{Case}, and also for the model used in the simulation package presented in \ref{Simulation}, we assume a closed population with negligible birth and death rates, i.e. $\frac{\Lambda}{\mu} \approx 1, ~\Lambda \approx 0, ~\mu \approx 0$. Time delay factors are also considered for the control action taken as well as resusceptibility. As a result, the system (\ref{eq:dSnotau})–(\ref{eq:dDnotau}) becomes
    \begin{align}
      \frac{dS(t)}{dt} &= - \beta (I(t)S(t) - \sigma I(t-\tau_\sigma)S(t-\tau_\sigma)) + R_s(t,\tau_\xi), \label{eq:dS} \\
      \frac{dE(t)}{dt} &= \beta (I(t)S(t) - \sigma I(t-\tau_\sigma)S(t-\tau_\sigma)) - \alpha E(t), \label{eq:dE} \\
      \frac{dI(t)}{dt} &= \alpha E(t) - \gamma I(t) - \Delta I(t), \label{eq:dI} \\
      \frac{dR(t)}{dt} &= \gamma I(t) - R_s(t,\tau_\xi), \label{eq:dR} \\
      \frac{dD(t)}{dt} &= \Delta I(t). \label{eq:dD}
    \end{align}
    and hence, the basic reproduction number in (\ref{eq:R01}) becomes
    \begin{equation}
      R_0 = \frac{\beta(1 - \sigma)}{\gamma + \Delta}. \label{eq:R02}
    \end{equation}

    The time delay $\tau_\sigma = \tau_{pre-\sigma}+\tau_{post-\sigma}$ indicates the time taken for the control action to take effect in flattening the infection curve, where $\tau_{pre-\sigma} \geq 0$ represents the time to initiate the control action after the first confirmed case at $t=0$, and $\tau_{post-\sigma} \geq 0$ represents the time after the control action has been initiated but before the effects are evidenced in the outputs of the system. In practical scenarios, $\tau_{post-\sigma}$ can be used to model the delay for the population to effectively respond to the rules introduced by the control action, such as social distancing, self-isolation, and lockdown.

    The function $R_s(t,\tau_\xi)$ represents the resusceptible stock with the consideration for temporal immunity, of which we can then rewrite (\ref{eq:Rs1}) using
    \begin{equation}
      R_s(t,\tau_\xi) = \xi R(t - \tau_\xi), \label{eq:Rs}
    \end{equation}
    where the time delay $\tau_\xi \geq 0$ represents the duration of temporal immune response of the recovered population. Hence, we can also update the expression for the number of recovered cases introduced in (\ref{eq:Rc1}) using
    \begin{equation}
      R_c(t) = R(t) - R_s(t,\tau_\xi). \label{eq:Rc}
    \end{equation}
    In an ideal situation where population who recovered develop permanent immunity against the virus, $\xi = 0$ and $\tau_\xi \rightarrow \infty$. As a result, (\ref{eq:Rs}) becomes $R_s(t,\infty) = 0$ and subsequently, (\ref{eq:Rc}) can be rewritten as $R_c(t) = R(t)$.

    The block diagram of the proposed SEIRS model with time delay is shown in Figure \ref{fig:BD}. The system with time delay is assumed to be stable and will exhibit similar disease-free equilibrium and endemic equilibrium properties as the system without time delay provided that the time delay parameters are nonnegative, i.e. $\tau_{\sigma} \geq 0, ~\tau_\xi \geq 0$. Detailed discussions on the theoretical stability analysis of SEIR and similar epidemic models with time delay can be found in studies such as \cite{LaSen, Tipsri2015, Huang2010, yan_liu_2006}.
    \begin{figure}
      \centering
      \includegraphics[width=\columnwidth]{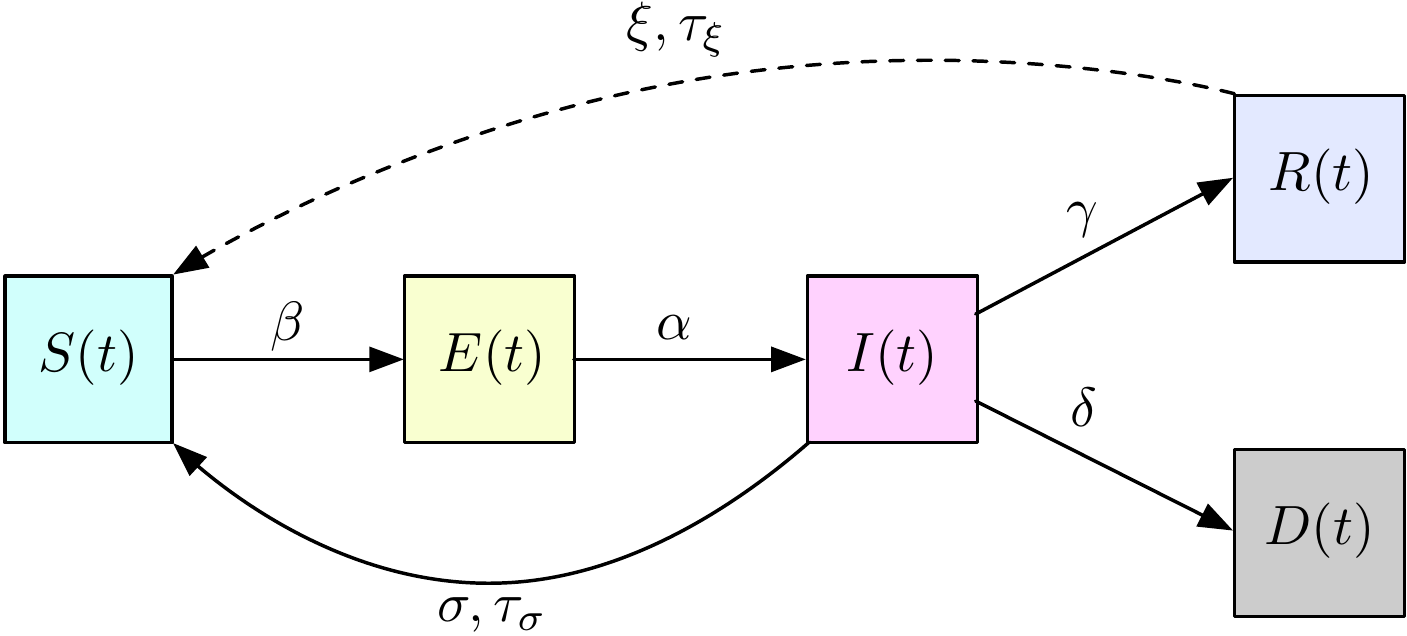}
      \caption{The block diagram of the proposed SEIRS system used in the simulation package.}
      \label{fig:BD}
    \end{figure}

    \section{Case Studies}\label{Case}
    \subsection{Case Study 1: Verification using Data in South Korea}\label{Korea}
    South Korea is used as a case study due to the amount of data available given that it is one of the first few countries to be directly affected by the virus outside of China. Its first confirmed case was reported on 20th January 2020 \cite{SHIM2020}. The other reason is that South Korea is also one of the very few countries that managed to effectively flatten the infection curve and it has set itself apart from others in leading the fight against COVID-19, at least for the moment. For example, the country started vigorous testing among its population with contact tracing, especially those of confirmed and suspected cases during the early stage of the epidemic. The government accomplished this by maintaining a public database keeping track of mobile phones, credit cards, and other data of patients who tested positive \cite{normile2020}. Also, on 16th March 2020, the authorities in the country began screening every person, both domestic and international, who arrived at its airports.

    As of 20th April 2020, there have been 10,674 confirmed cases and 236 fatalities in South Korea \cite{LancetDong}. As a result, we used the following parameters for our simulation. First, we assumed that the population of South Korea to be approximately $N =$ 51.5$\times 10^6$ with an elderly population of about 15\% ($N_{old} =$ 0.15) \cite{oecd2018}. We then set the percentage of recovery to be 98\% ($\kappa =$ 0.98) for the general public \cite{LancetDong} and a fatality rate of $8\%$ ($(1 - \kappa_{old}) =$ 0.08) for the elderly \cite{KCDC2020}. We then assumed the incubation time and recovery time to be $\tau_{inc} =$ 5.1, and $\tau_{rec} =$ 18.8, respectively in accordance with \cite{flaxman2020}. The basic reproduction number was set to $R_0 =$ 5.1 (95\% CI: 4.97--5.23) based on the early growth-rate of the epidemic in South Korea. The initial infected and exposed cases were assumed to be $I(0) = 4$ and $E(0) = 20I(0)$, respectively. See Table \ref{tab:Korea}. Figure \ref{fig:KoreaNA1} shows the results of the initial fitting of the model based on the data in South Korea while Figure \ref{fig:KoreaNA2} shows the projections of the model when no control action is taken. There are some minor discrepancies between the modelled values and the real-world data during the initial stage of the simulation as seen in Figure \ref{fig:KoreaNA1}. This is absolutely reasonable and acceptable while modelling an actual epidemic as most countries are still coming to terms with the virus during the first month and the data do not usually represent the actual number of cases due to lack of testing for confirmed cases.
    \begin{table}
      \centering
      \caption{Initial parameters used to fit the model to the data in South Korea.}
      \label{tab:Korea}
      {\small
      \begin{tabular}{lc}
        \toprule
        {\bf Parameter}                           & {\bf Value} \\
        \midrule
        Stock population, $N$                     & 51.5$\times 10^6$  \\
        Fraction of elderly population, $N_{old}$ & 0.15                \\
        Percentage of recovery, $\kappa$          & 0.98                \\
        Fatality rate for elderly, $1 - \kappa_{old}$ & 0.08            \\
        Incubation time, $\tau_{inc}$             & 5.1 days            \\
        Recovery time, $\tau_{rec}$               & 18.8 days           \\
        Basic reproduction number, $R_0$          & 5.1 (95\% CI: 4.97--5.23) \\
        Initial infected cases, $I(0)$            & 4                   \\
        Initial exposed cases, $E(0)$             & 80                  \\
        \bottomrule
      \end{tabular}}
    \end{table}

    \begin{figure*}[t!]
      \centering
      \begin{subfigure}{\columnwidth}
        \centering
        \includegraphics[width=\columnwidth]{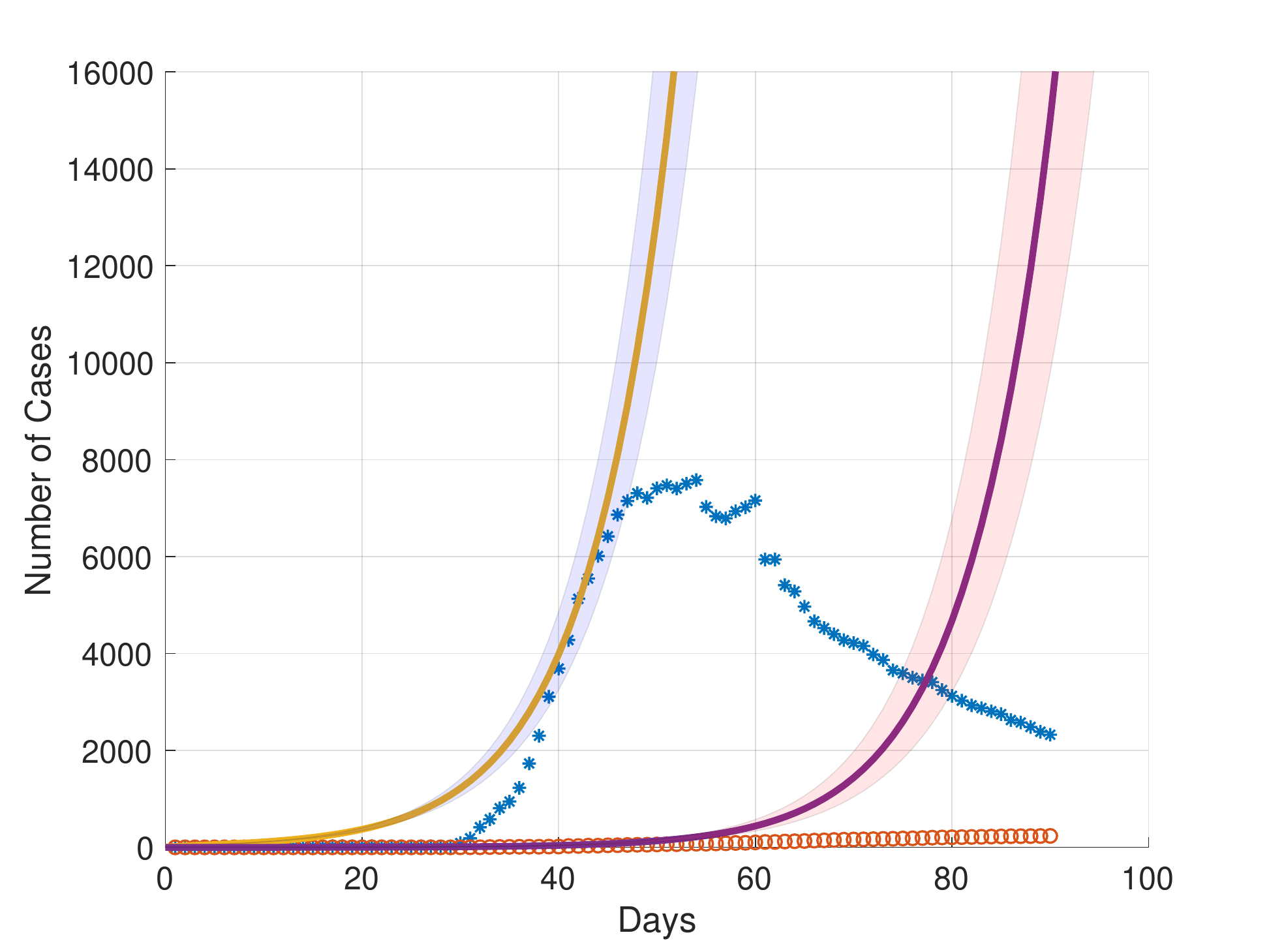}
        \caption{ }
        \label{fig:KoreaNA1}
      \end{subfigure}
      \begin{subfigure}{\columnwidth}
        \centering
        \includegraphics[width=\columnwidth]{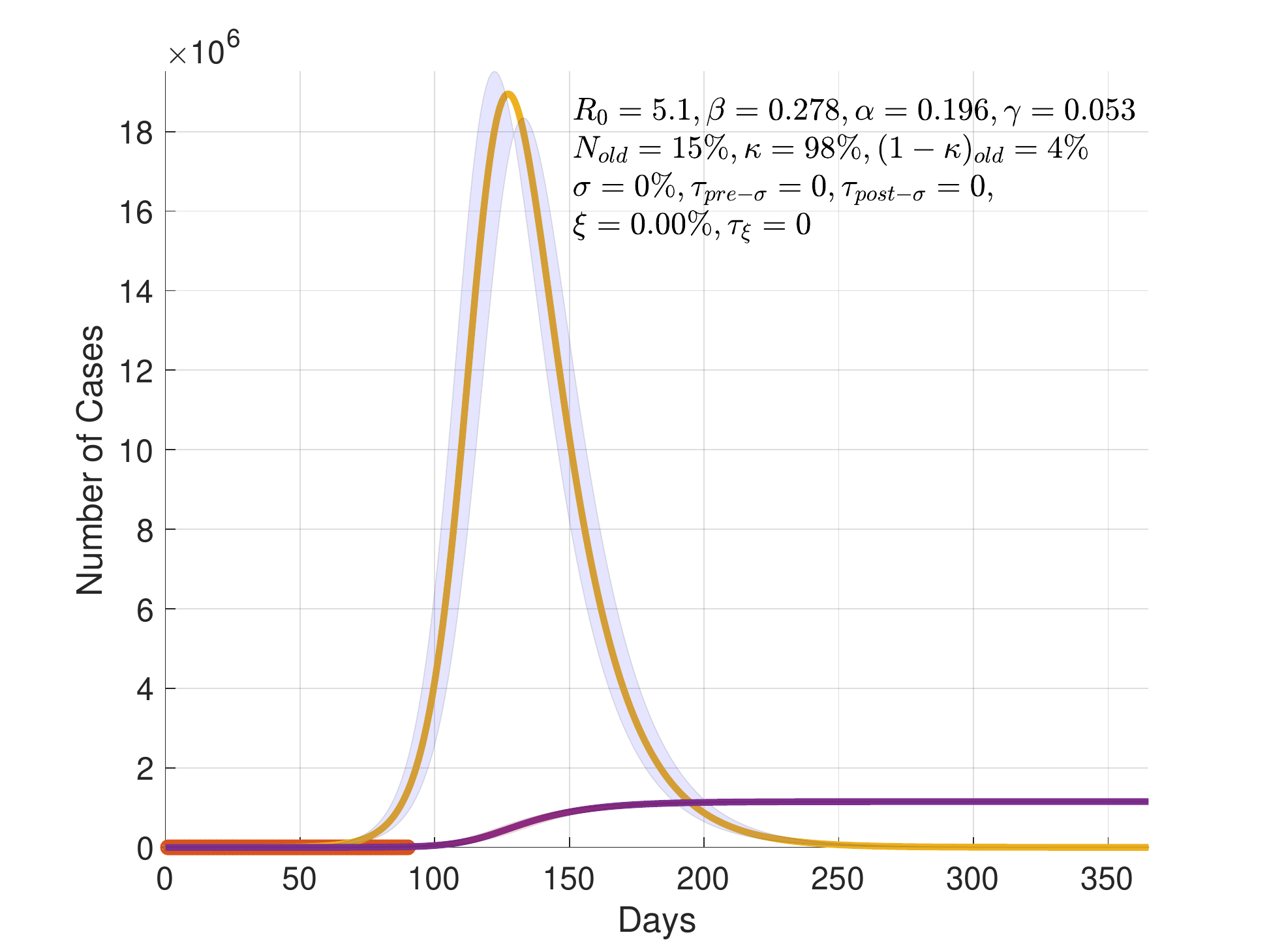}
        \caption{ }
        \label{fig:KoreaNA2}
        \end{subfigure} \\
        \begin{subfigure}{\textwidth}
          \centering
          \includegraphics[width=0.3\textwidth]{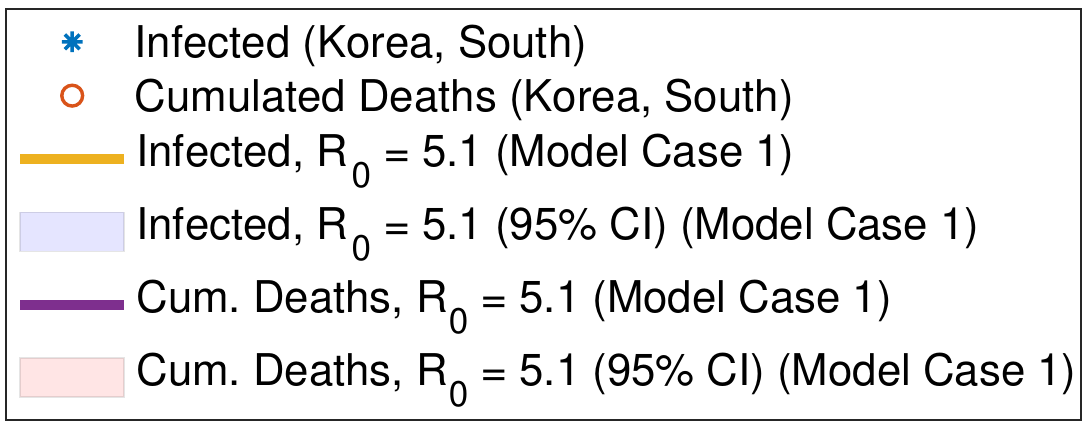}
        \end{subfigure}
        \caption{Subfigure (a) shows the initial fitting of the model onto the data in South Korea and subfigure (b) shows the projections of the model when no control action is taken.}
      \end{figure*}

      Once we have the initial fitting of the model, we introduced control action in line with the mitigation and preventive measures taken by the government. Due to the aforementioned vigorous testing, contact tracing, and isolation efforts taken, we assumed that the control action has an efficiency of 88\% ($\sigma =$ 0.88). As a result, the reproduction number could be reduced to $R_0 \approx$ 0.61. We also assumed that there was a time delay of 30 days since the first confirmed case before the control action was introduced ($\tau_{pre-\sigma} = 30$) and a further delay of approximately 13 days before the control action could be properly executed in the community ($\tau_{post-\sigma} = 13$). Figure \ref{fig:KoreaA} shows the simulation results. Figure \ref{fig:Korea1} shows that the trajectory of the modelled infected and death cases match the real-world data after the control action was introduced. Figure \ref{fig:Korea2} shows the simulation results until the model stabilises assuming no subsequent control action being taken to further reduce the reproduction number. Comparing Figure \ref{fig:Korea2} with Figure \ref{fig:KoreaNA2}, the peak of the number of infected cases could be reduced from about 19 million cases to about 7,500 cases.
      \begin{figure*}[t!]
        \centering
        \begin{subfigure}[t]{\columnwidth}
          \centering
          \includegraphics[width=\columnwidth]{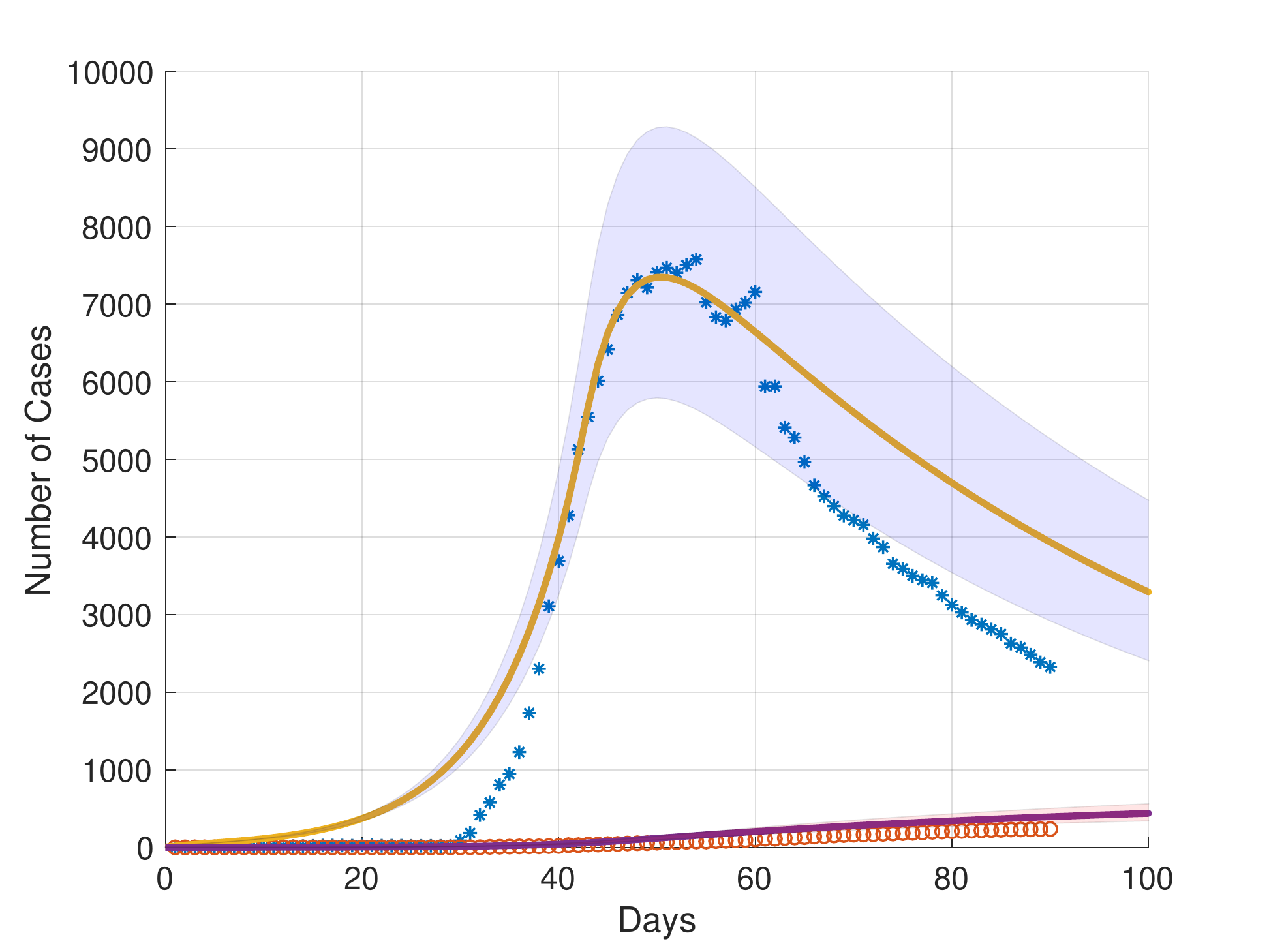}
          \caption{ }
          \label{fig:Korea1}
        \end{subfigure}
        \begin{subfigure}[t]{\columnwidth}
          \centering
          \includegraphics[width=\columnwidth]{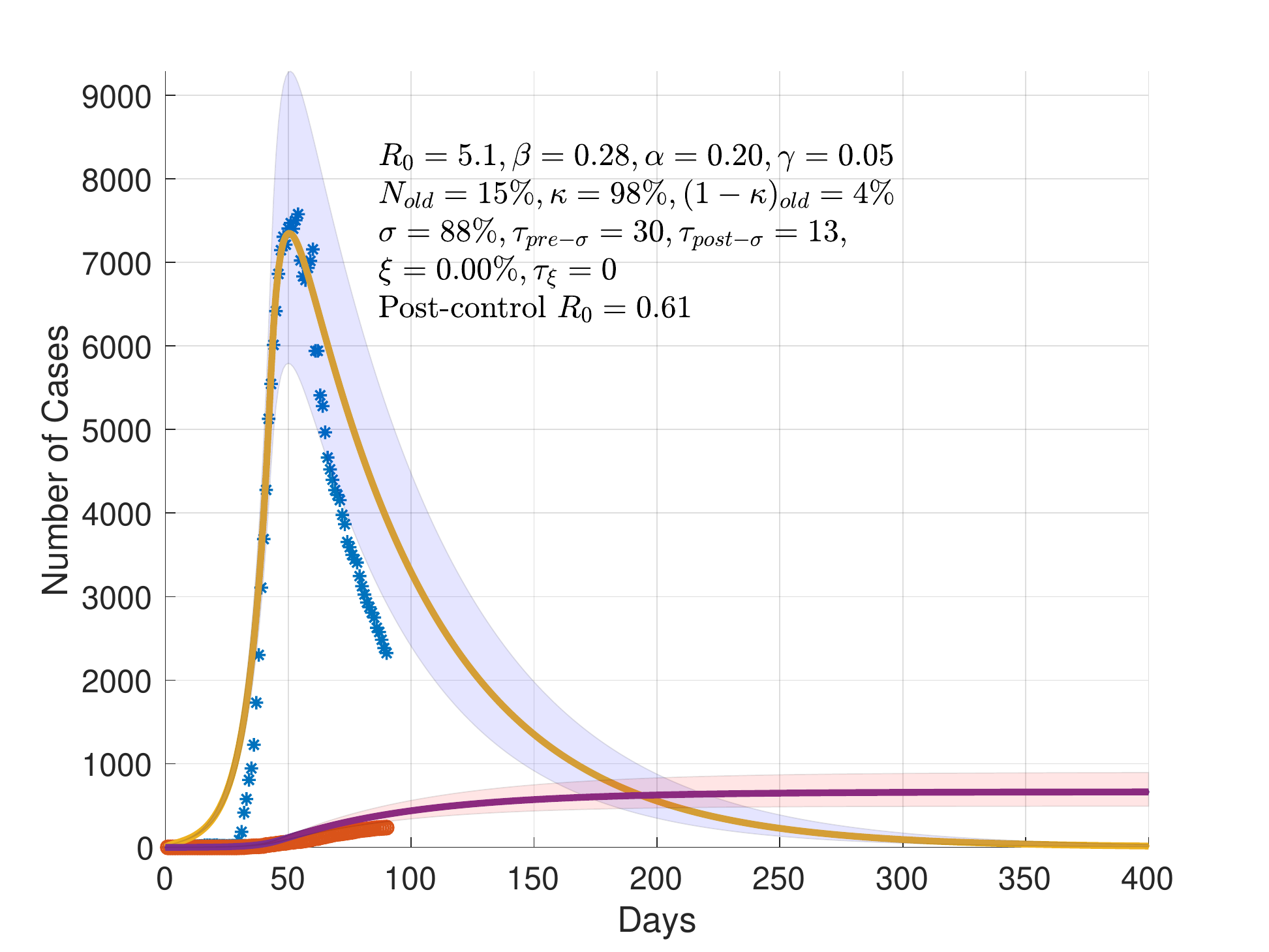}
          \caption{ }
          \label{fig:Korea2}
          \end{subfigure}  \\
          \begin{subfigure}{\textwidth}
            \centering
            \includegraphics[width=0.3\textwidth]{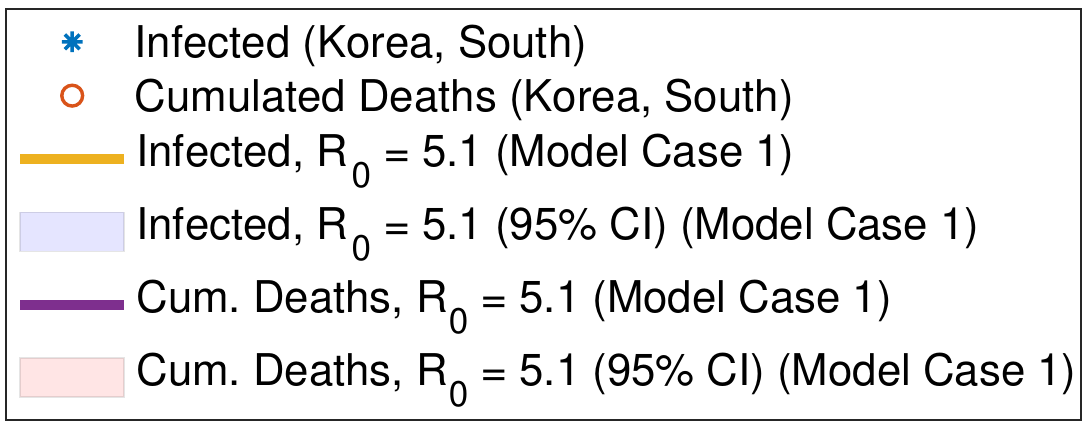}
          \end{subfigure}
          \caption{The projections of the model onto the data in South Korea when control action with an efficiency of 88\% is taken. Subfigure (a) shows the projections during the first 100 days while subfigure (b) shows the projections until the system achieves stability.}
          \label{fig:KoreaA}
        \end{figure*}

        Subsequently, beginning 20th March 2020, stronger infectious disease control measures for travellers coming from overseas were enforced, where all Koreans and foreigners with residence in Korea arriving from all countries would be subject to self-quarantine for 14 days upon entry. All short-term travellers will also be ordered to self-quarantine with exceptions for some limited special cases \cite{KCDC20202}. Around the same time, the Korea Centres for Disease Control and Prevention (KCDC) also started advising all people in the country to observe an ``Enhanced Social Distancing Campaign'' \cite{KoreaTimeline}.  Inducing these control actions into the model produces the results shown in Figure \ref{fig:KoreaB}. The second control action adds another efficiency of 50\%, hence bringing the reproduction number further down to $R_0 \approx 0.31$.
        \begin{figure*}[t!]
          \centering
          \begin{subfigure}{\columnwidth}
            \centering
            \includegraphics[width=\columnwidth]{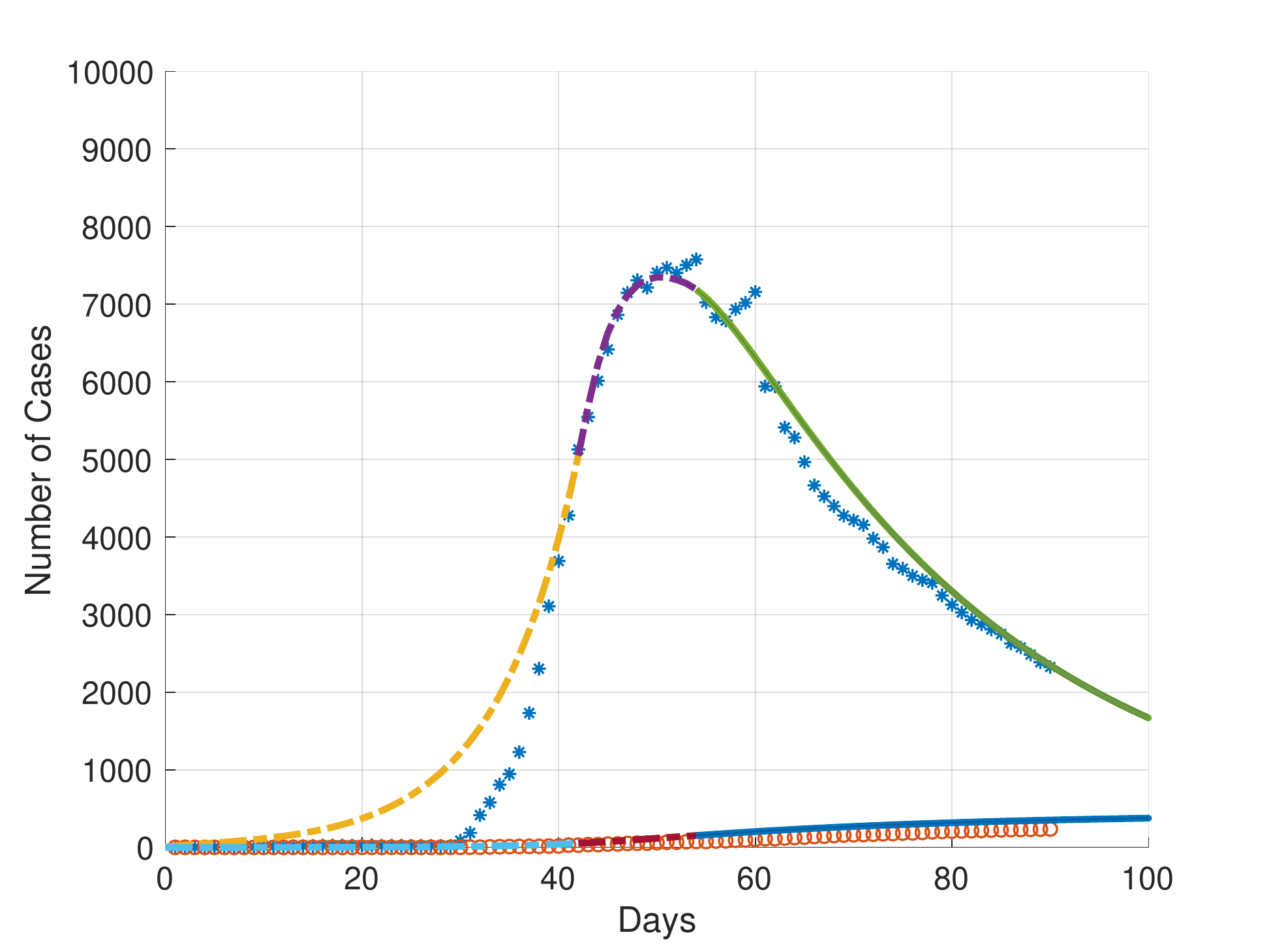}
            \caption{ }
            \label{fig:Korea1b}
          \end{subfigure}
          \begin{subfigure}{\columnwidth}
            \centering
            \includegraphics[width=\columnwidth]{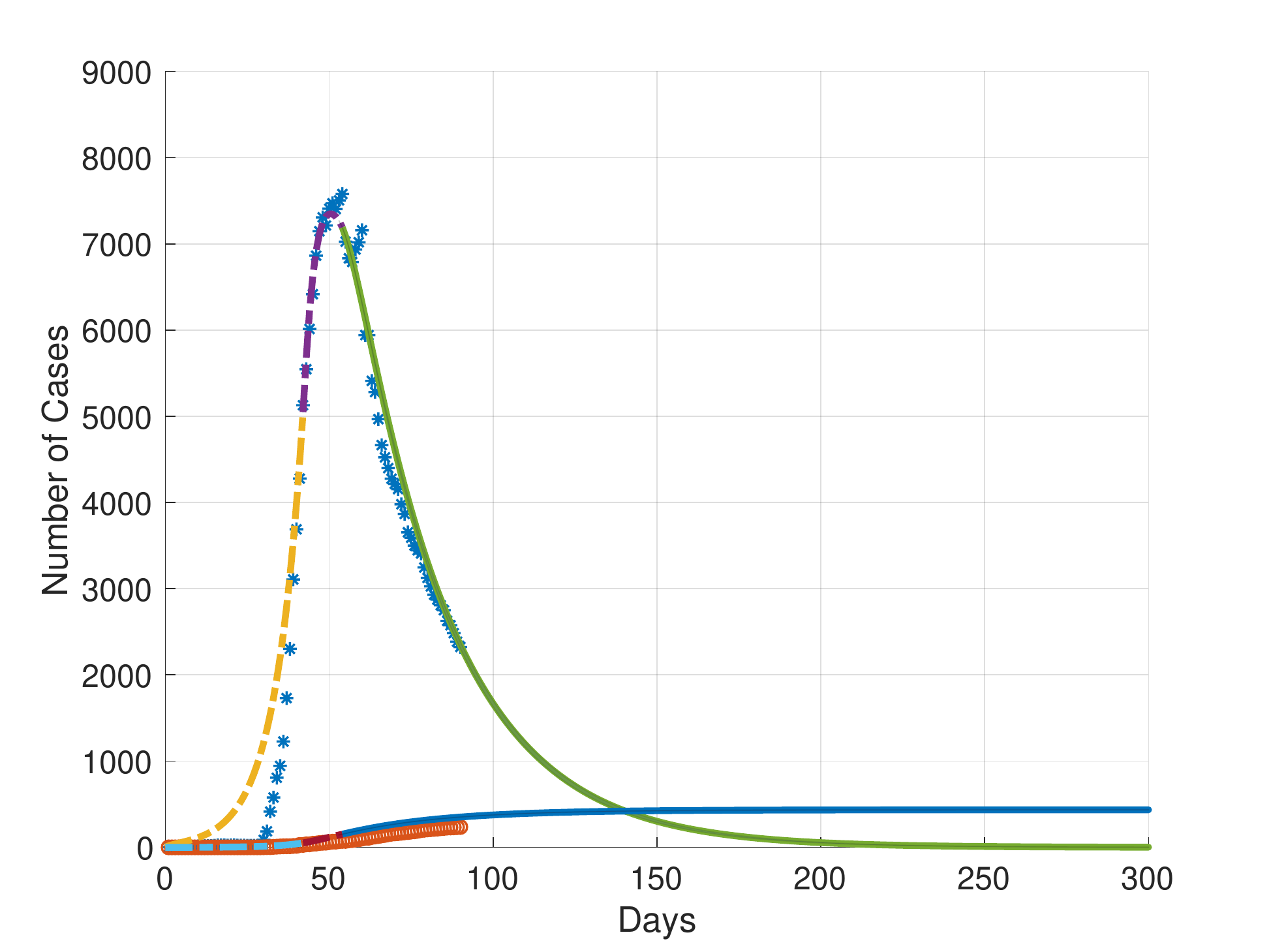}
            \caption{ }
            \label{fig:Korea2b}
            \end{subfigure} \\
            \begin{subfigure}{\textwidth}
              \centering
              \includegraphics[width=0.3\textwidth]{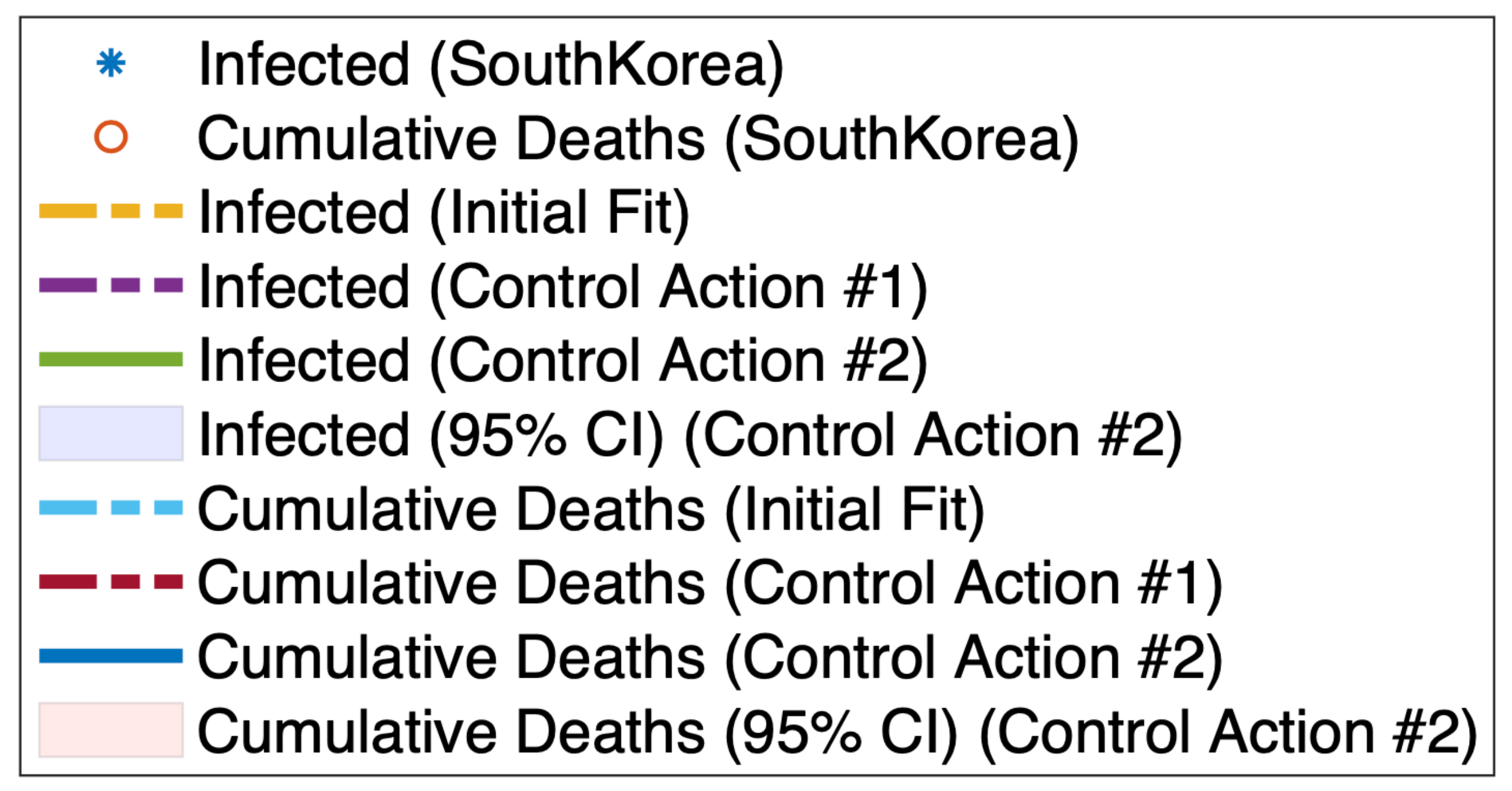}
            \end{subfigure}
            \caption{The projections of the model onto the data in South Korea when a second control action with an efficiency of 50\% is taken. Subfigure (a) shows the projections during the first 100 days while subfigure (b) shows the projections up till 300 days.}
            \label{fig:KoreaB}
          \end{figure*}

          \subsubsection{Simulation with Resusceptibility}
          One of the many uncertainties about COVID-19 is whether patients who have recovered from the virus will be reinfected in the future. There have been reports in the news that patients who recovered from the virus were tested positive for a second time after being cleared of the virus \cite{Jap2020, Kor2020, Dahl2020}. On the other hand, most health authorities believe that patients who recovered may develop an immunity towards the virus. However, it is not sure if the said immunity is short-term or long-term. Hence, further research is required to provide clinical proofs to this hypothesis.
          \begin{figure*}[t!]
            \begin{subfigure}{\columnwidth}
              \centering
              \includegraphics[width=\columnwidth]{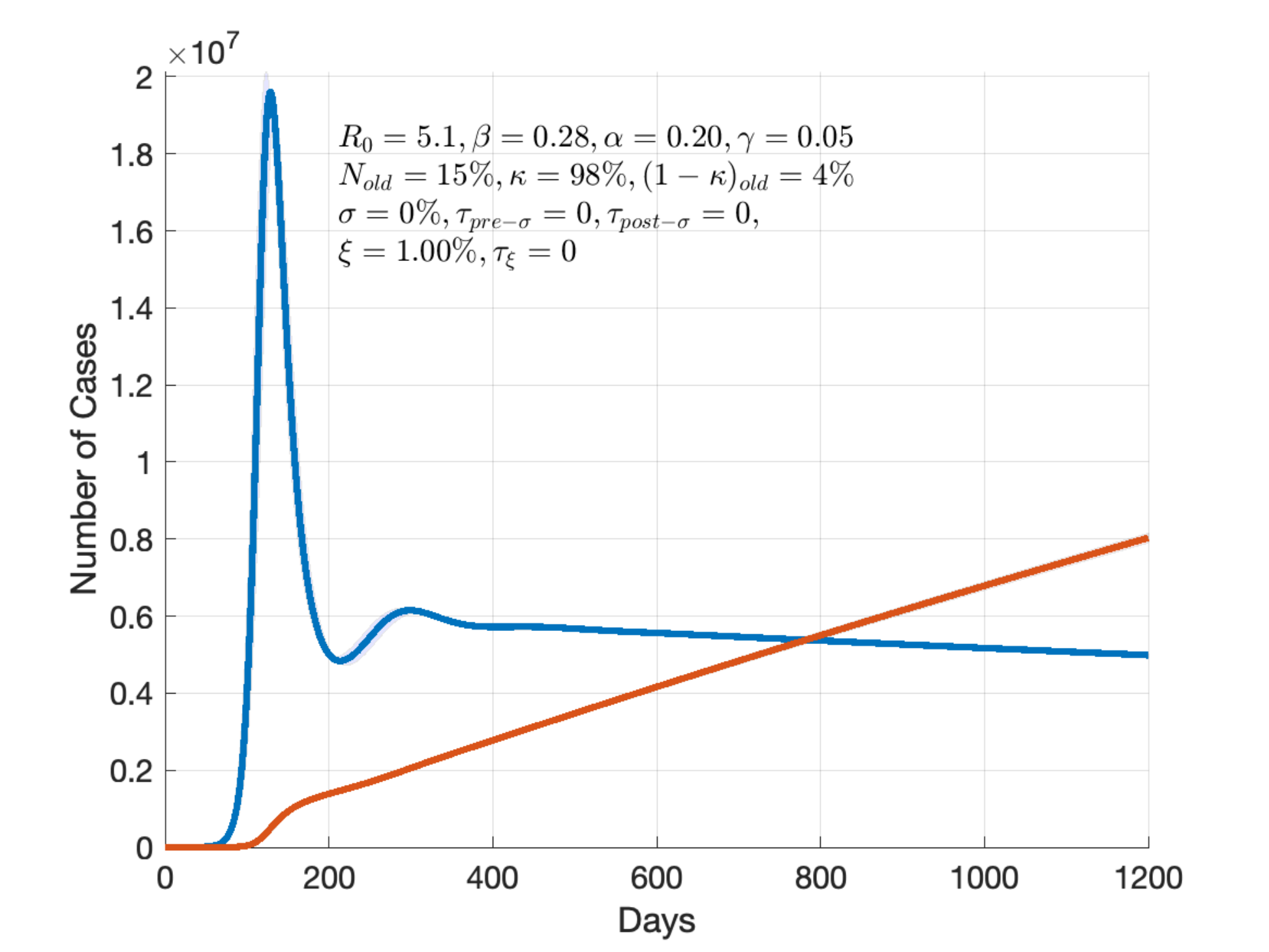}
              \caption{ }
              \label{fig:Korea4}
            \end{subfigure}
            \begin{subfigure}{\columnwidth}
              \centering
              \includegraphics[width=\columnwidth]{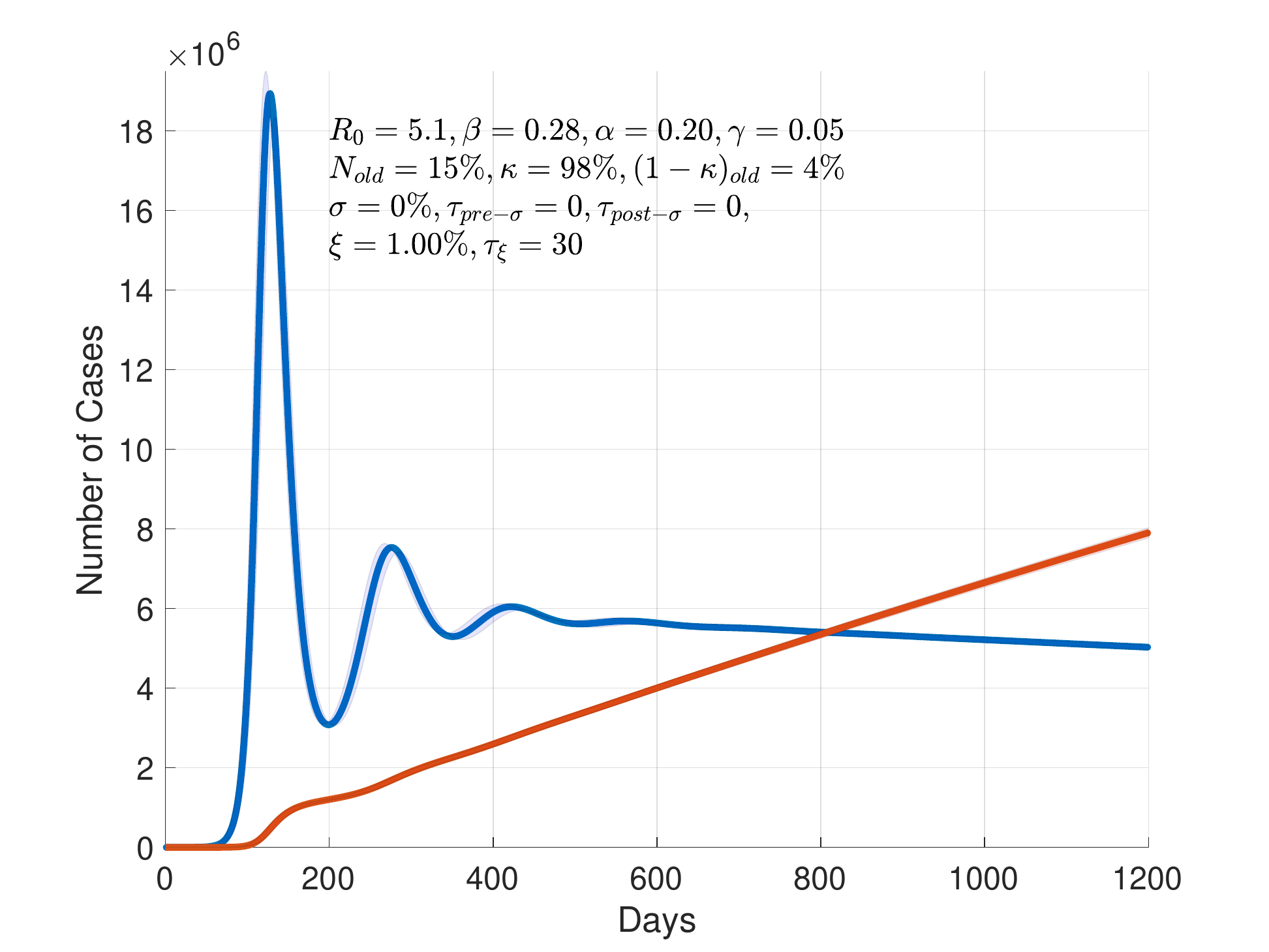}
              \caption{ }
              \label{fig:Korea5}
            \end{subfigure}
            \newline
            \begin{subfigure}{\columnwidth}
              \centering
              \includegraphics[width=\columnwidth]{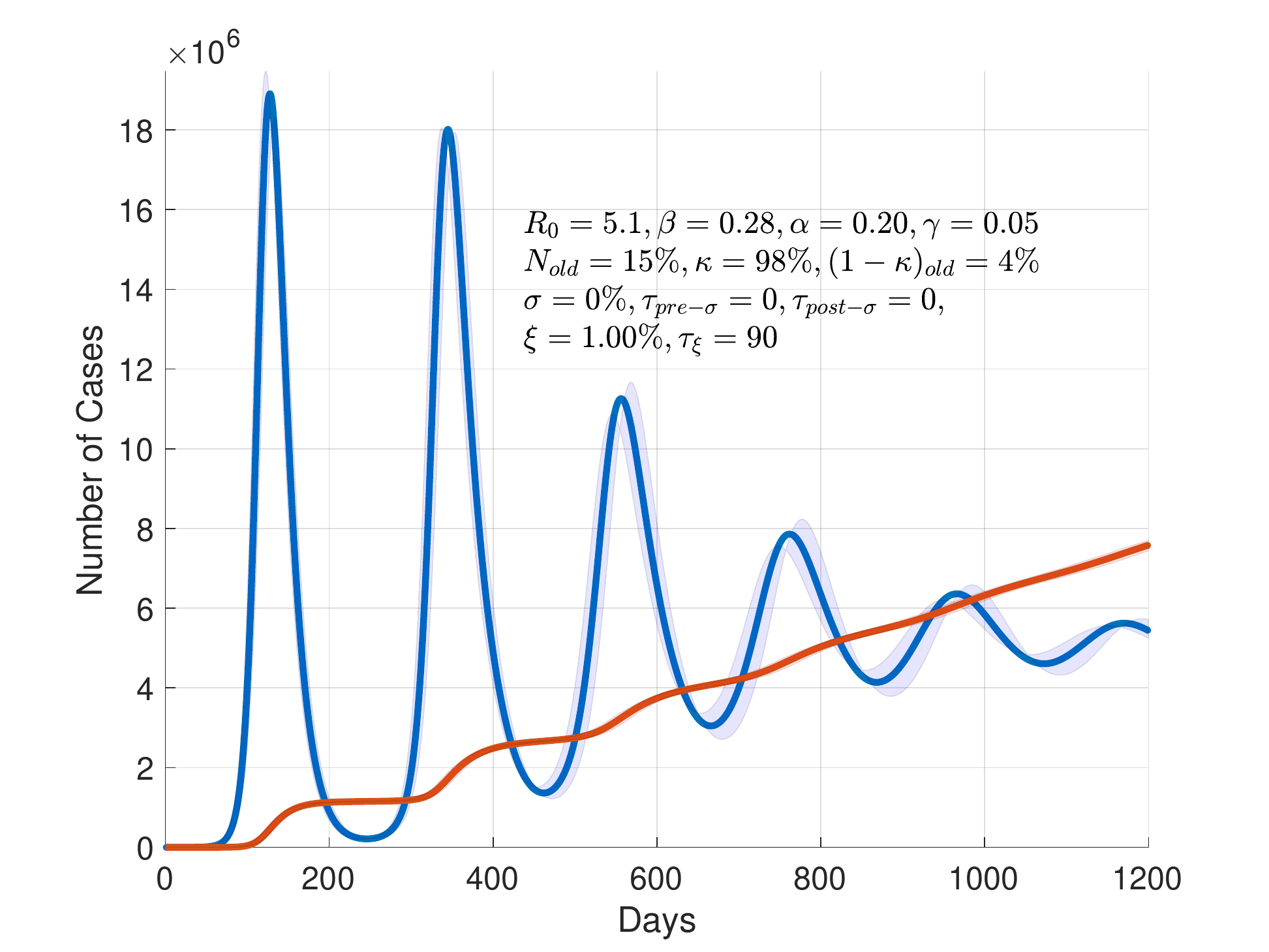}
              \caption{ }
              \label{fig:Korea6}
            \end{subfigure}
            \begin{subfigure}{\columnwidth}
              \centering
              \includegraphics[width=\columnwidth]{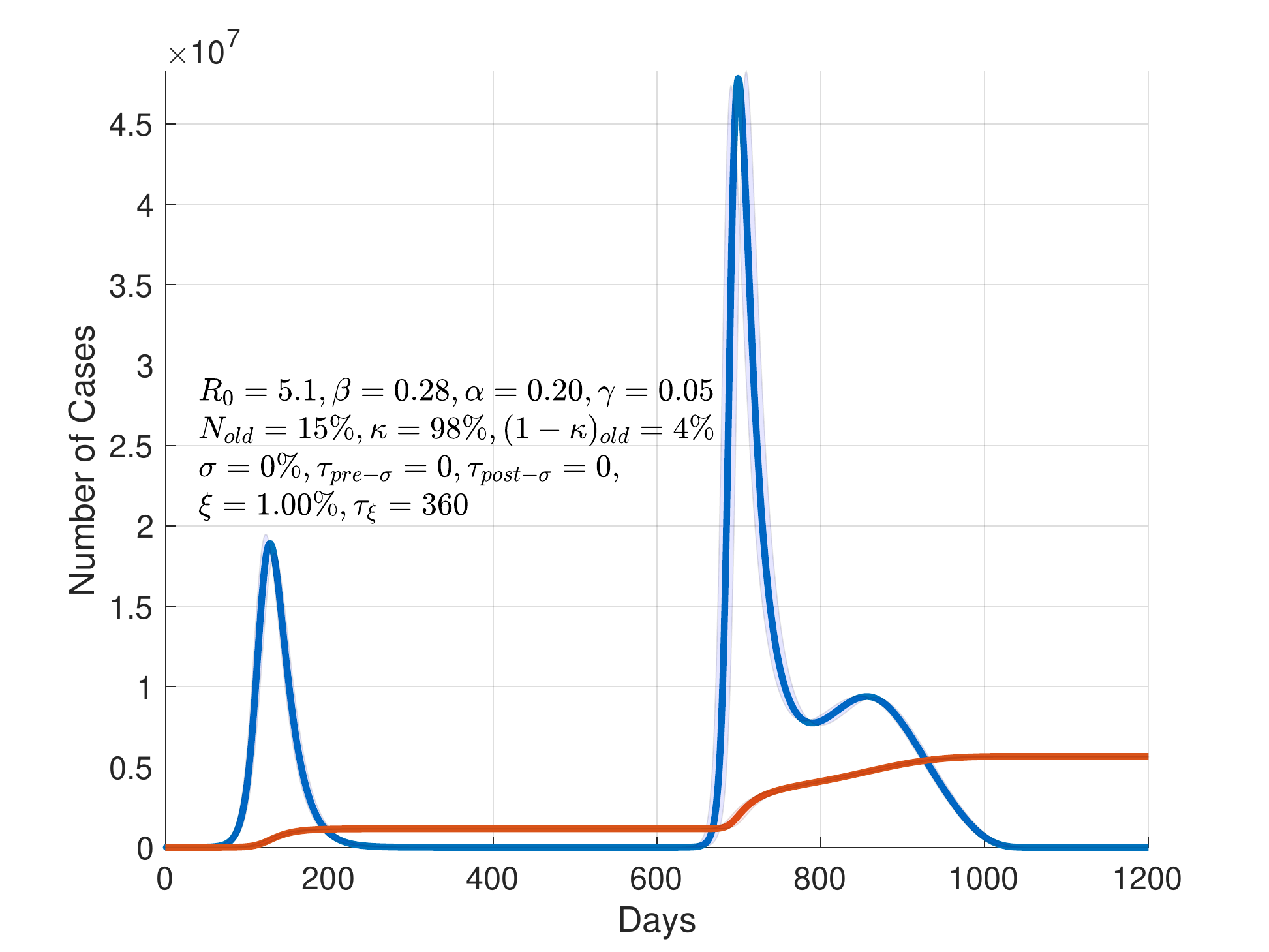}
              \caption{ }
              \label{fig:Korea7}
            \end{subfigure}
            \begin{subfigure}{\textwidth}
              \centering
              \includegraphics[width=0.3\textwidth]{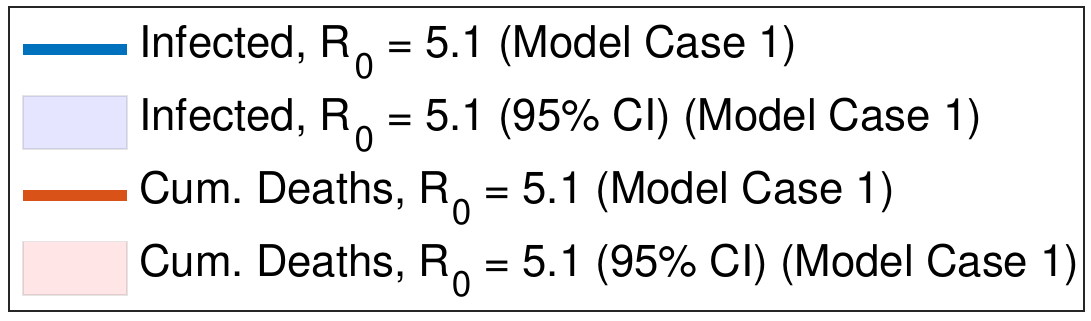}
            \end{subfigure}
            \caption{Trajectories for the infected and fatalities in South Korea due to resusceptibility where it is assumed that 1\% of the recovered cases are reinfected after a time span of temporal immunity of (a) 0 day, (b) 30 days, (c) 90 days, and (d) 360 days, respectively. However, these results only apply assuming that if there is no control action being taken to flatten the curve.}
            \label{fig:Korea3}
          \end{figure*}

          As such, we repeated the simulation for the Case Study on South Korea without control action, but with the inclusion of a possibility of resusceptibility. Here, we assumed that 1\% of the patients who recovered are resusceptible towards the virus ($\xi =$ 0.01), where the patients develop temporal immunity of $\tau_\xi = 0, 30, 90, 360$ days, respectively after recovering from the initial infection of the virus. Figure \ref{fig:Korea3} shows the simulation results, where the first infection spikes shown in all subfigures are synonymous to the result presented in Figure \ref{fig:KoreaNA2}. The subsequent infection spikes are the result of resusceptibility, depending on the days of temporal immune response. The results show new surges in infection cases after the specific $\tau_\xi$ in each case, which diminish over time as more people develop immunity towards the virus. Interestingly, for the result shown in Figure \ref{fig:Korea7} where $\tau_\xi = 360$ days, it could also be used to reflect on the situation where the virus may exhibit similar characteristics as the seasonal flu or the pandemic influenza A (pH1N1) that it is most likely active during certain seasons of the year, e.g. autumn/winter for the seasonal flu and spring/summer for the pH1N1, in which case an annual vaccine administration is necessary \cite{Kelly2009, Cook2011}.

          \subsection{Case Study 2: Prediction using Data in Northern Ireland}\label{NI}
          Given the location of which this research is based, data in Northern Ireland is used for prediction study of the model. The reports on confirmed and death cases are published daily since 24th March 2020 by the Northern Ireland Public Health Agency (PHA) via their Daily COVID-19 Surveillance Reports \cite{NI2020}. The first confirmed case was recorded on 27th February 2020, and as of 20th April 2020, the total number of confirmed cases stood at 2,728 with 207 fatalities.
          \begin{table}
            \centering
            \caption{Initial parameters used to fit the model to the data in Northern Ireland.}
            \label{tab:NI}
            {\small
            \begin{tabular}{lc}
              \toprule
              {\bf Parameter}                           & {\bf Value} \\
              \midrule
              Stock population, $N$                     & 1.88$\times 10^6$  \\
              Fraction of elderly population, $N_{old}$ & 0.18                \\
              Percentage of recovery, $\kappa$          & 0.94                \\
              Fatality rate for elderly, $1 - \kappa_{old}$ & 0.12            \\
              Incubation time, $\tau_{inc}$             & 5.1 days            \\
              Recovery time, $\tau_{rec}$               & 18.8 days           \\
              Basic reproduction number, $R_0$          & 5.0 (95\% CI: 4.85--5.15) \\
              Initial infected cases, $I(0)$            & 3                   \\
              Initial exposed cases, $E(0)$             & 60                  \\
              \bottomrule
            \end{tabular}}
          \end{table}

          We used the parameters in Table \ref{tab:NI} for the initial fitting of the model based on the data from PHA on the initial growth-rate of the epidemic in Northern Ireland. Figure \ref{fig:NINA1} shows the results of the initial fitting, with Figure \ref{fig:NINA2} depicting the projections of the infected and deaths if no control action is taken. We then simulated the model based on the control action carried out; most schools in Northern Ireland were closed beginning 18th March 2020 followed by a national lockdown initiated by the United Kingdom government on the 23rd March 2020. As such, we set $\tau_{pre-\sigma} = 20$ (20 days) to correspond to said dates since the first confirmed case, and assuming that it took a further approximately 12 days for the public to respond to these control action, i.e. $\tau_{post-\sigma} = 12$, we obtained the simulation results as shown in Figure \ref{fig:NIA}. The results show that in order for the model to follow the projected trajectories of the data from PHA in Figure \ref{fig:NI1}, the control action have to achieve an efficiency of about 45\% ($\sigma =$ 0.45), which indicates that the reproduction number could be reduced to $R_0 \approx$ 2.75. Comparing Figure \ref{fig:NINA2} with Figure \ref{fig:NI2}, the peak of the number of infected cases could be reduced from 650,000 cases to 350,000 cases.
          \begin{figure*}[t!]
            \centering
            \begin{subfigure}[t]{\columnwidth}
              \centering
              \includegraphics[width=\columnwidth]{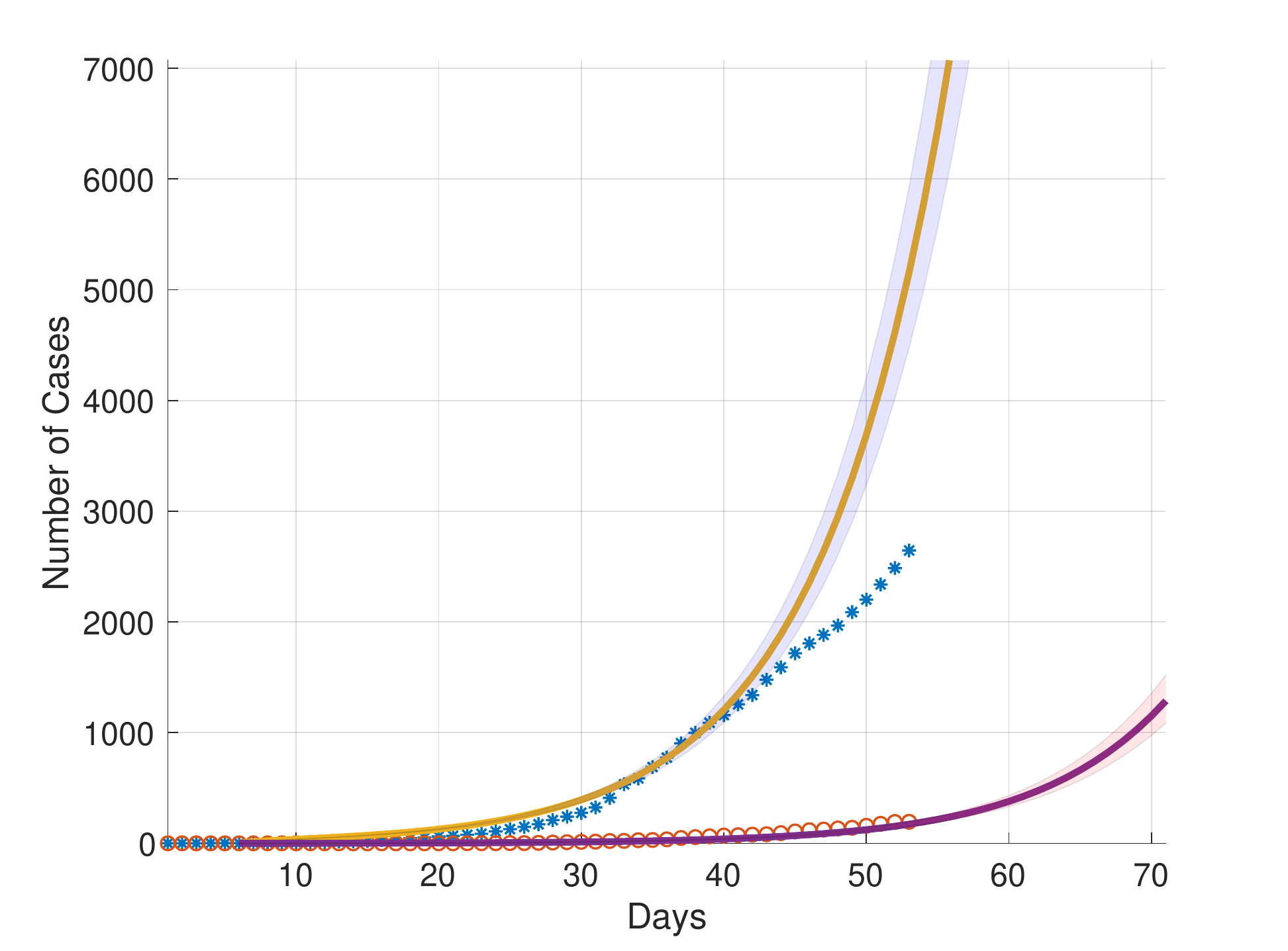}
              \caption{ }
              \label{fig:NINA1}
            \end{subfigure}
            \begin{subfigure}[t]{\columnwidth}
              \centering
              \includegraphics[width=\columnwidth]{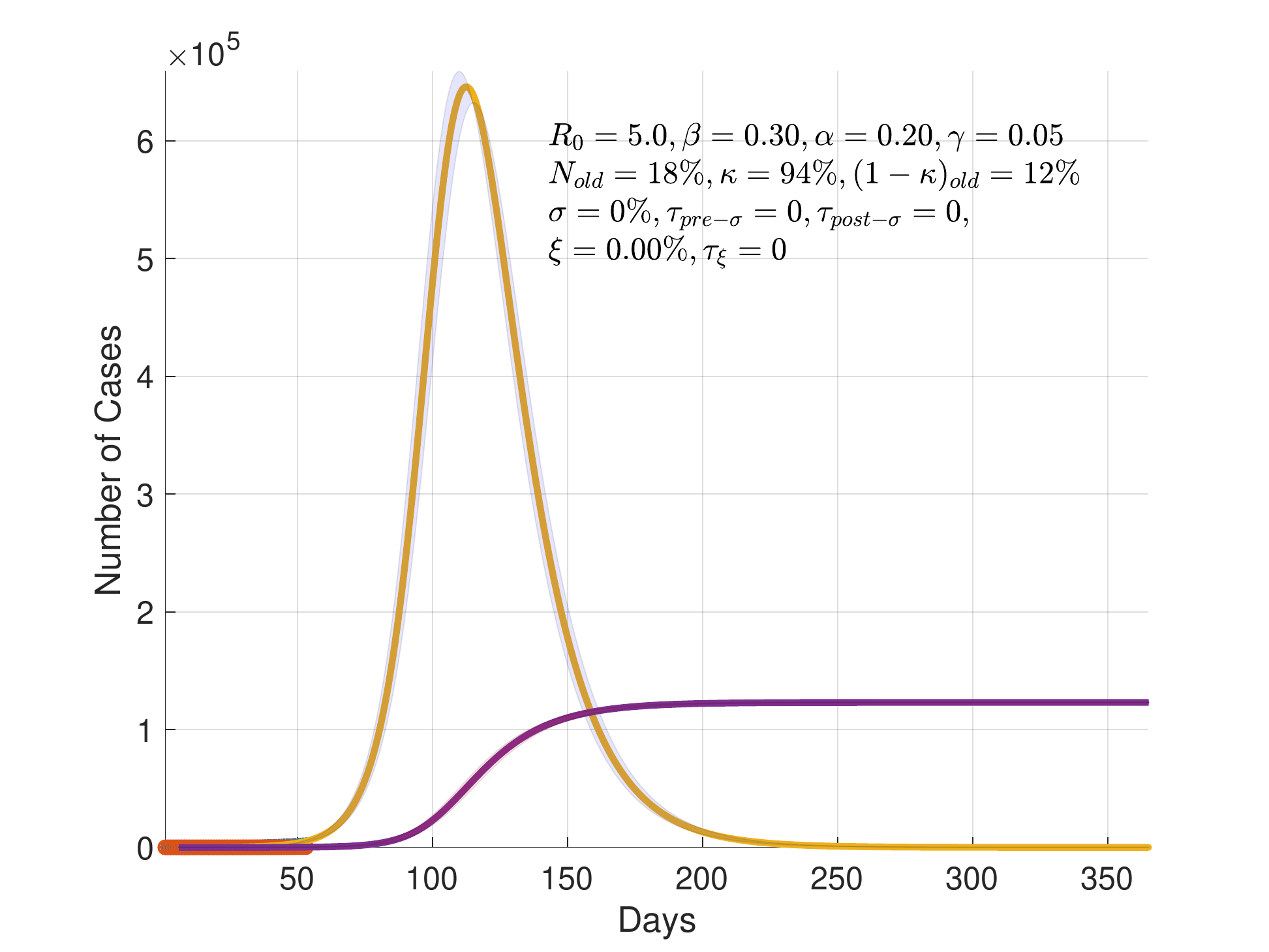}
              \caption{ }
              \label{fig:NINA2}
            \end{subfigure}
            \begin{subfigure}{\textwidth}
              \centering
              \includegraphics[width=0.3\textwidth]{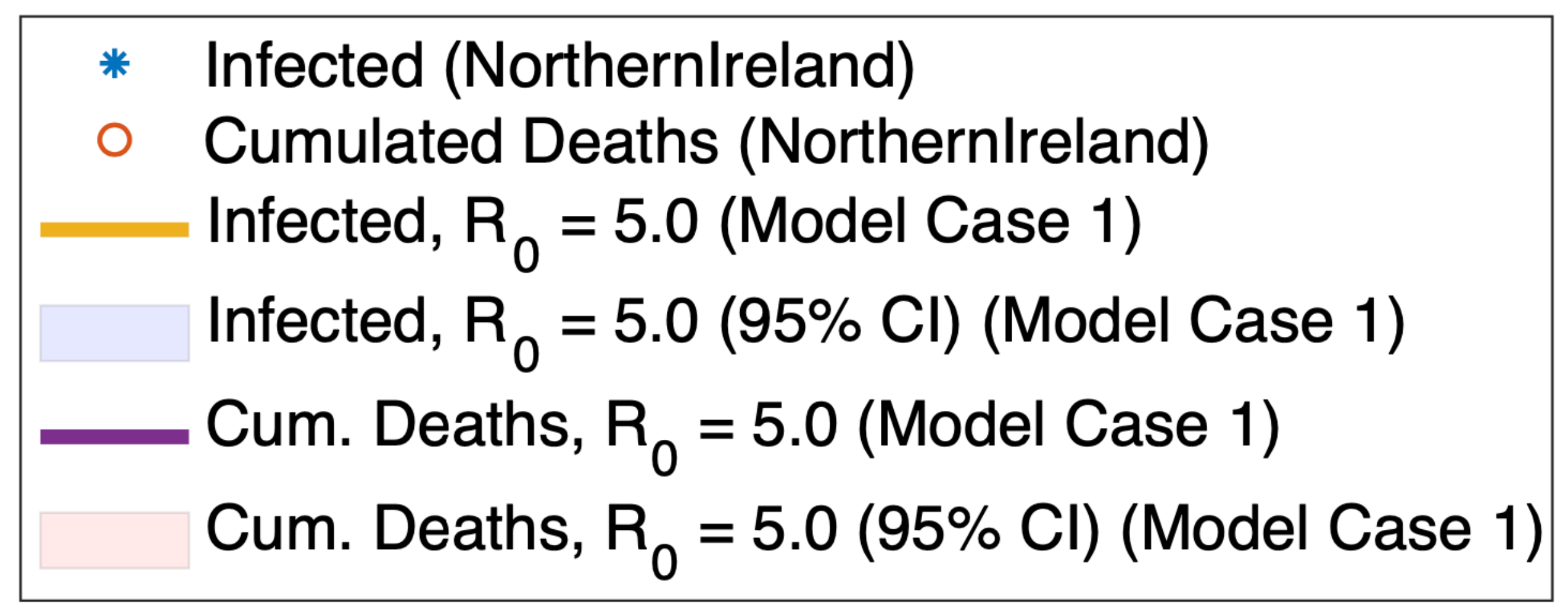}
            \end{subfigure}
            \caption{Subfigure (a) shows the initial fitting of the model onto the data in Northern Ireland and subfigure (b) shows the projections of the model when no control action is taken.}
            \label{fig:NINA}
          \end{figure*}

          \begin{figure*}[t!]
            \centering
            \begin{subfigure}{\columnwidth}
              \centering
              \includegraphics[width=\columnwidth]{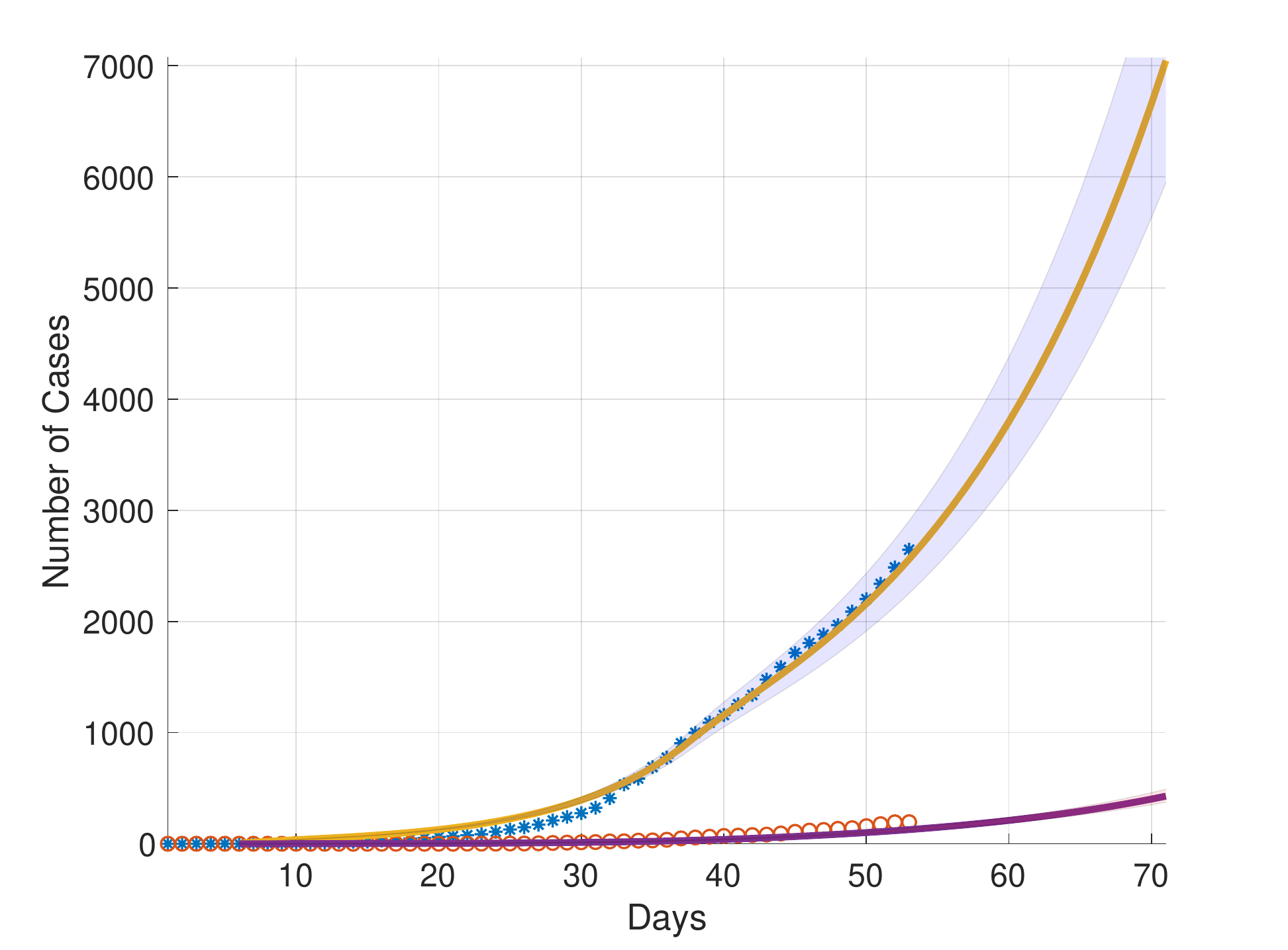}
              \caption{ }
              \label{fig:NI1}
            \end{subfigure}
            \begin{subfigure}{\columnwidth}
              \centering
              \includegraphics[width=\columnwidth]{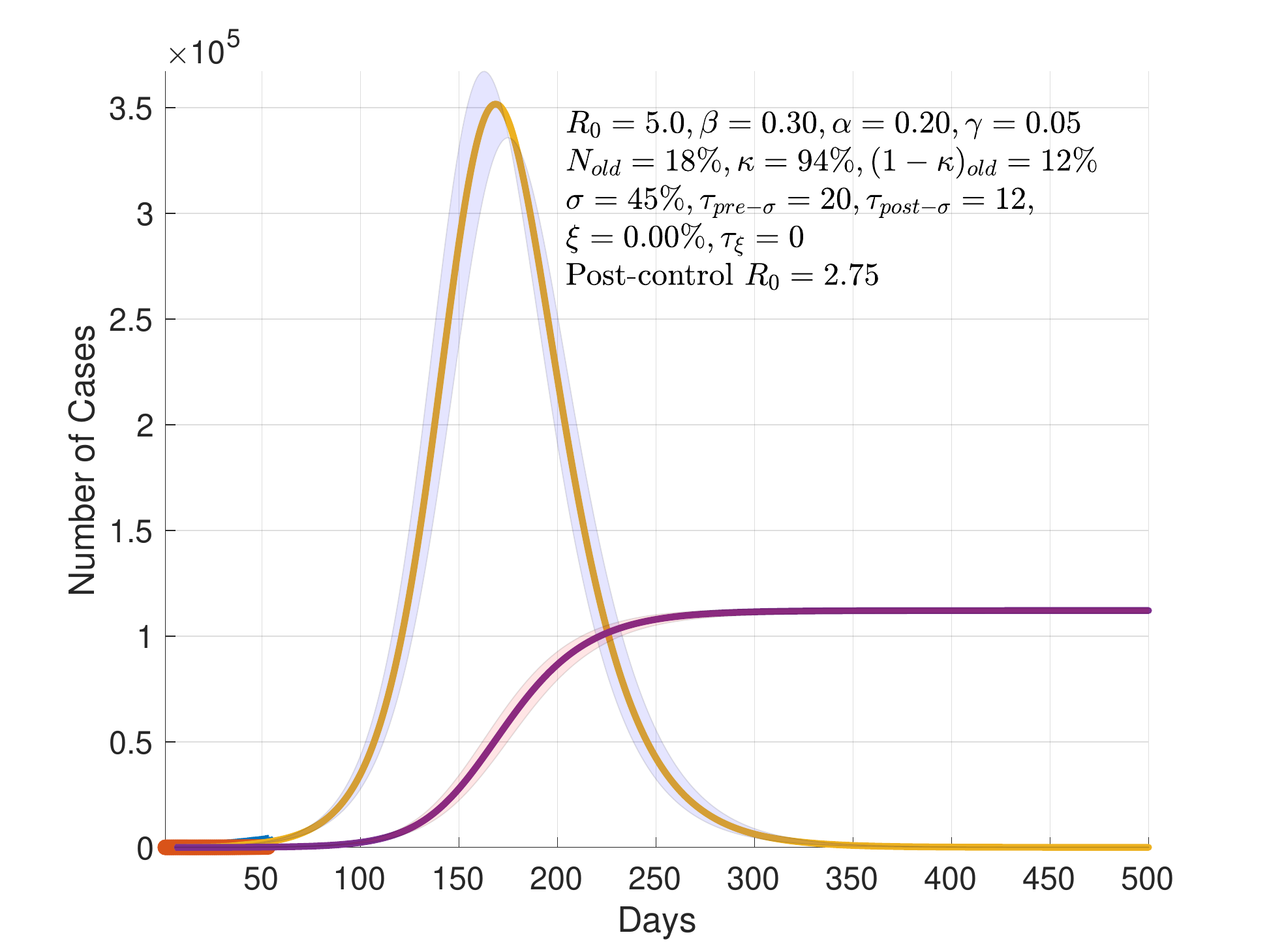}
              \caption{ }
              \label{fig:NI2}
            \end{subfigure}
            \begin{subfigure}{\textwidth}
              \centering
              \includegraphics[width=0.3\textwidth]{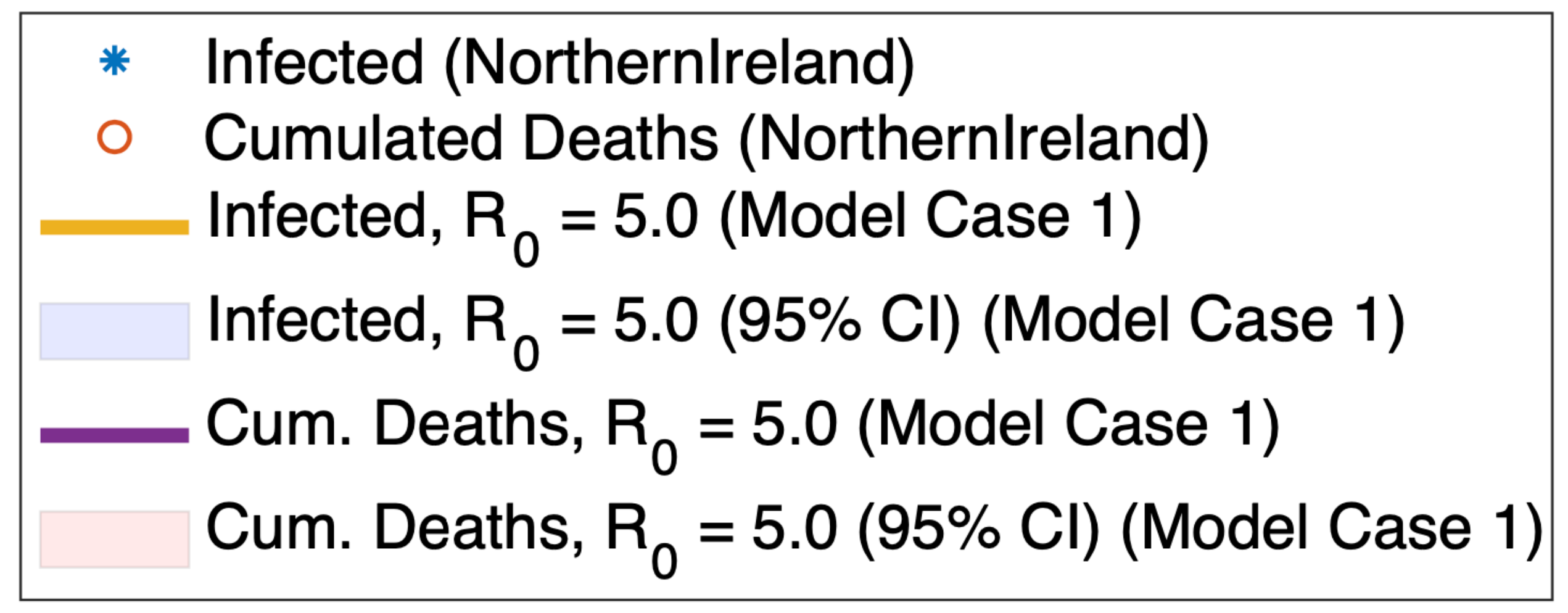}
            \end{subfigure}
            \caption{The projections of the model onto the data in Northern Ireland when control action with an efficiency of 45\% is taken. Subfigure (a) shows the projections during the first 70 days while subfigure (b) shows the projections until the system achieves stability.}
            \label{fig:NIA}
          \end{figure*}

          \subsubsection{Further Control Action To Meet Critical Care Capacity}
          However, based on the results shown in Figure \ref{fig:NIA}, it is essential to further flatten this curve due to the limit of about 330 critical care beds available in Northern Ireland (100 initial setup + 230 introduced by the newly built Nightingale hospital) \cite{BBC2020}. According to the Intensive Care National Audit \& Research Centre (ICNARC) with its ``Report on 2249 patients critically ill with COVID-19'' dated 4th April 2020, about 6\% of those tested positive for the SARS-CoV-2 required critical care \cite{ICNARC2020}. Meanwhile in Italy, as of 29th March 2020, up to 12\% of all positive cases were admitted to the intensive care unit (ICU) \cite{Phua2020}. As such, should we assume that approximately 10\% of those tested positive in Northern Ireland would require ICU admission, then the peak of the infection curve should not exceed 3,300 cases, i.e. more control action have to be taken to reduce the peak of 350,000 cases as seen in Figure \ref{fig:NI2}.

          Therefore, on day 38, which is about one week after the infection curve started to flatten due to the first control action, a second control action was introduced into the model. This second control action also reflects on the announcement made by the United Kingdom government in early April 2020 to allow police officers to enforce social distancing measures. Assuming that this second control action results in a further efficiency of 66\%, the reproduction number could then be reduced to $R_0 \approx$ 0.93, and that it would take another 7 days for the public to fully respond to the control action, we could observe the results as shown in Figure \ref{fig:NISA}. With the initiation of the second control action, it can be seen in Figure \ref{fig:NISA1} that the peak in the infection curve is now reduced to 3,500 cases. As such, the critical care capacity should be able to meet the demand for treatment based on the same assumption made earlier in this section, where it is estimated about 10\% of the infected cases are admitted to the ICU. Another observation worth noting is that the number of deaths in the worst case scenario has also been reduced to about 3,000 cases. See Figure \ref{fig:NISA2}. This value echos the projection made by the government in Northern Ireland that COVID-19 could lead to 3,000 deaths during the first wave \cite{BBCNI2020}.
          \begin{figure*}[t!]
            \centering
            \begin{subfigure}{\columnwidth}
              \centering
              \includegraphics[width=\columnwidth]{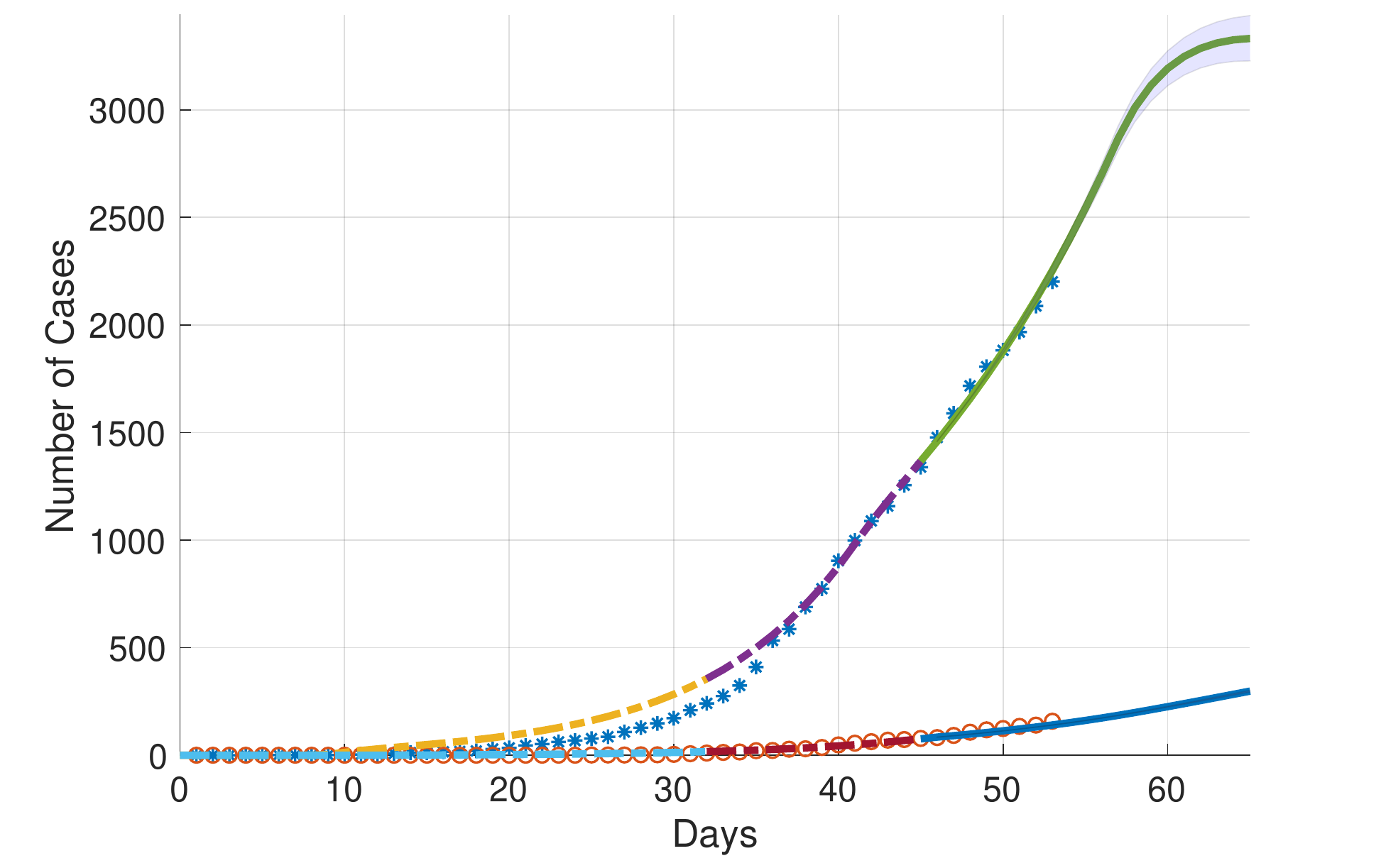}
              \caption{ }
              \label{fig:NISA1}
            \end{subfigure}
            \begin{subfigure}{\columnwidth}
              \centering
              \includegraphics[width=\columnwidth]{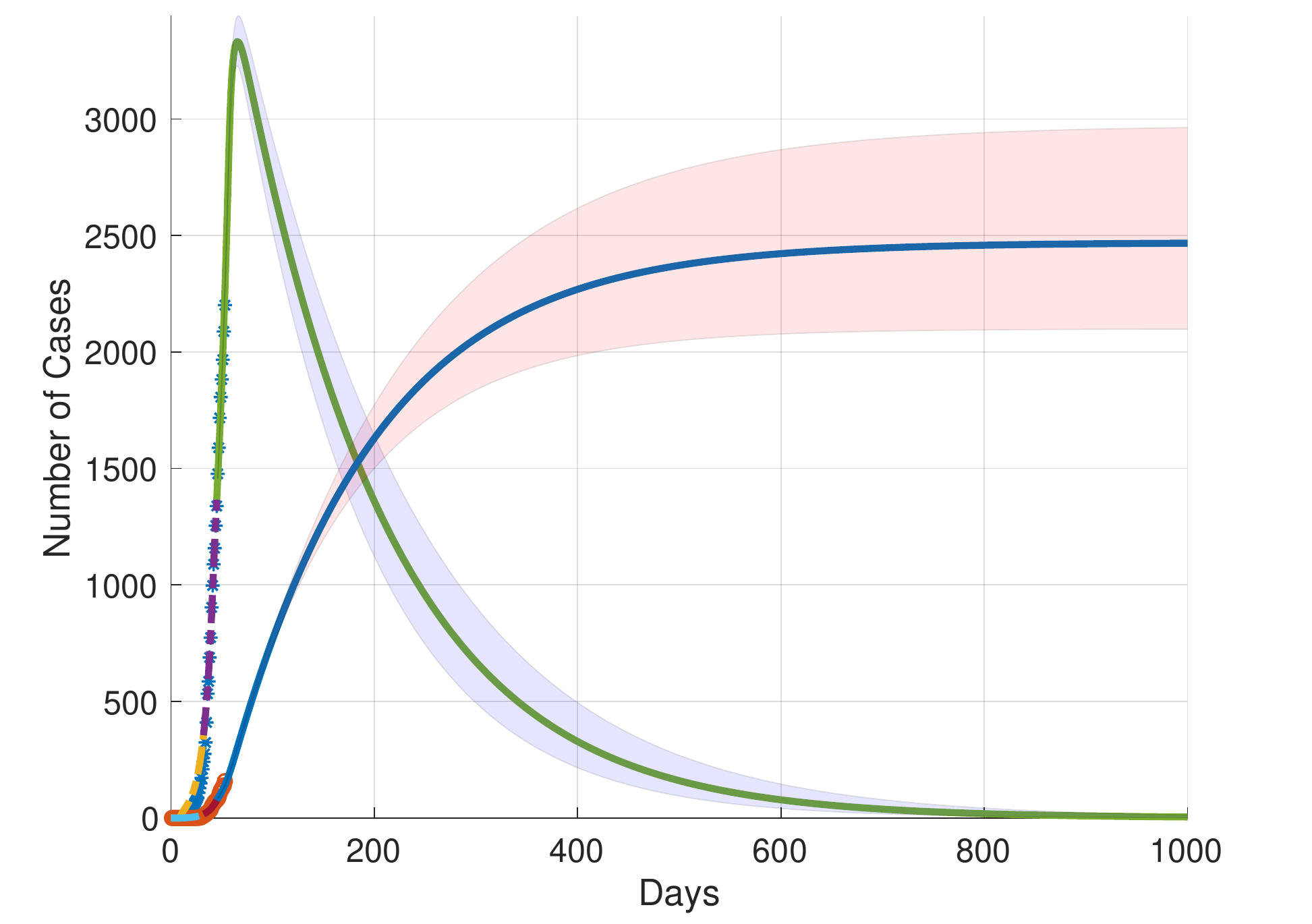}
              \caption{ }
              \label{fig:NISA2}
            \end{subfigure}
            \begin{subfigure}{\textwidth}
              \centering
              \includegraphics[width=0.3\textwidth]{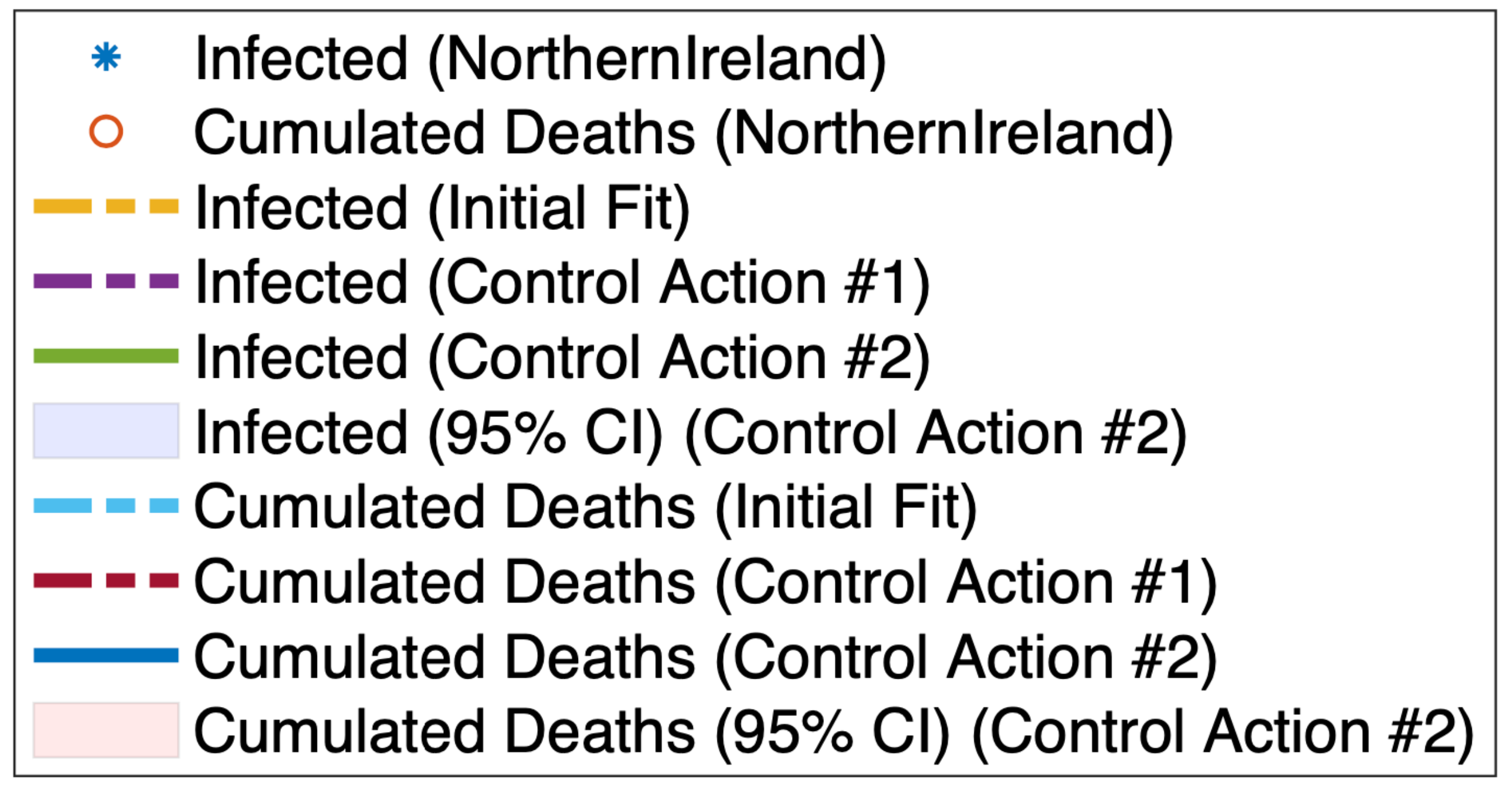}
            \end{subfigure}
            \caption{Subfigure (a) shows the initial projections of the infected and cumulative deaths curves in Northern Ireland after the second control action with an efficiency of 66\% was introduced on day 38 of the model. Subfigure (b) shows the remaining projections of the model.}
            \label{fig:NISA}
          \end{figure*}

          \section{Conclusion}\label{Conclusion}
          {This paper has presented a robust model for COVID-19 based on a modified SEIRS method to include considerations for the ageing population, and time delay for control action as well as resusceptibility of the recovered population due to temporal immunity. Two case studies using real-world data were presented in this research; the first case for verification of the model based on the data in South Korea including a study on the possibility of resusceptibility of recovered population; and the second case for prediction study of the model using data and up-to-date control action and related events in Northern Ireland. The simulation results from the case studies have clearly shown that the time of which the control action is taken and also the time for the public to properly respond to such intervention measures are critical in helping to flatten the curve. Also, until a time where a vaccine is developed and made available to the general public, the possibility of resusceptibility, no matter how small, will lead to subsequent waves of infections in the future depending on the time of temporal immunity. A simulation package was developed using the MATLAB/Simulink environment to ease understanding on the spread of the virus as well as the efficiency that needs to be achieved by the control action in order to successfully flatten the infection curve to not overload the healthcare capacity.

          Interesting future research and expansion of the model include but not limited to the predictions for the occupancy of ICU beds, the effects of easing control action on $R_0$ and hence, the time of which control action have to be reinstigated, as well as specific subregions analysis such as demographic information to model the transmission of the virus among subregions in the country to cater for population movements and travels.


          \bibliographystyle{elsarticle-num}
          \bibliography{refs}

          %
          %
          %
          %

          \appendix
          \section{Description of The Simulation Package}\label{Simulation}
          Figure \ref{fig:GUI} shows the graphical user interface (GUI) of the simulation package developed using the MATLAB/Simulink environment. Users can use this interface to set preferred settings for the simulation and also to view simulation results. The simulation kit can be downloaded from \url{https://github.com/nkymark/COVIDSim}.
          \begin{figure*}[t!]
            \centering
            \includegraphics[width=\textwidth]{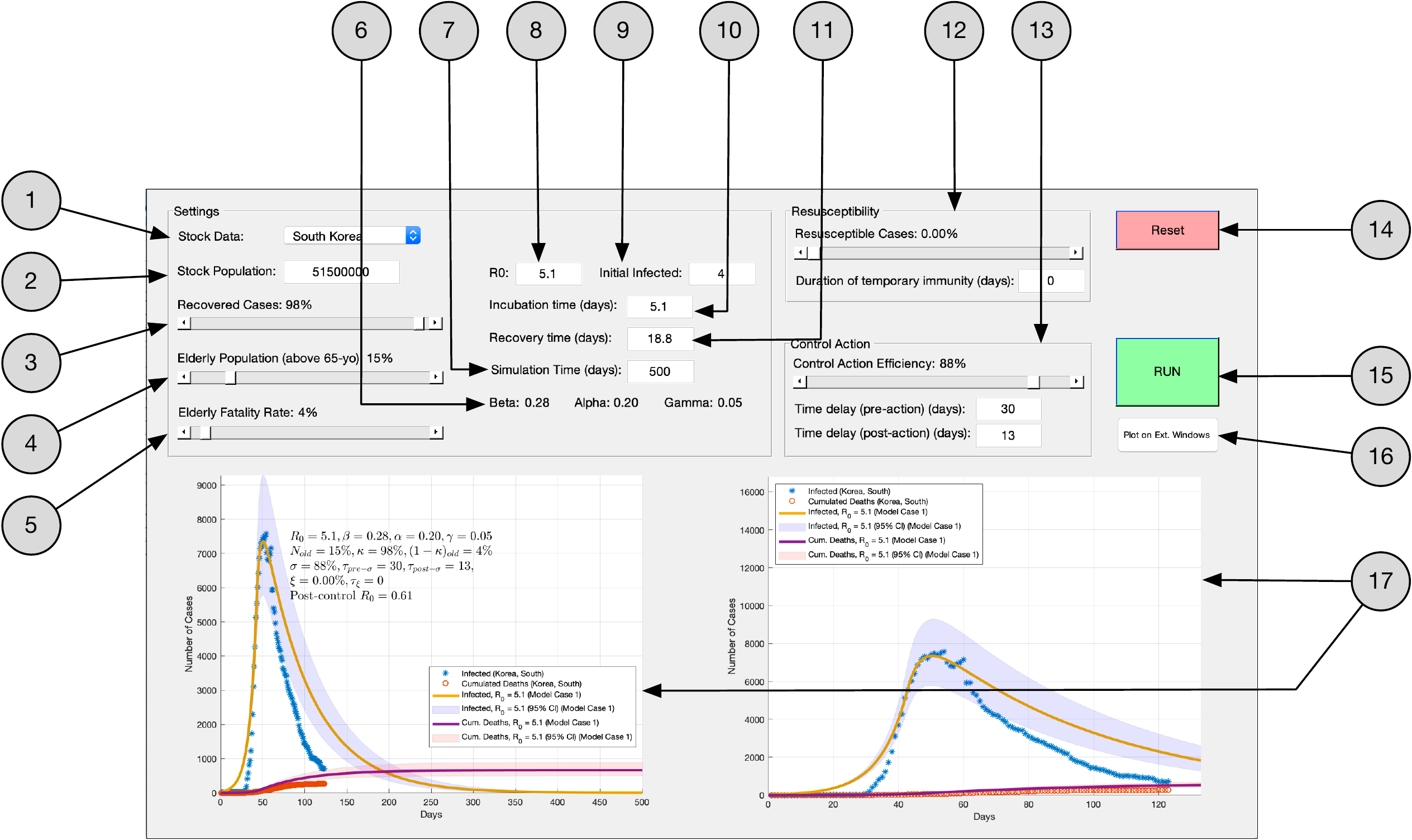}
            \caption{The main graphical user interface of the simulation package in MATLAB. \textcircled{\tiny 1} Load real-world data for the selected country. \textcircled{\tiny 2} Set the stock population $N$ for simulation. \textcircled{\tiny 3} Set the percentage of recovered cases $\kappa$. \textcircled{\tiny 4} Set the fraction of elderly population $N_{old}$. \textcircled{\tiny 5} Set the fatality rate for the elderly population $(1-\kappa_{old})$. \textcircled{\tiny 6} Computed values for $\beta = R_0 \gamma,~ \alpha = \frac{1}{\tau_{inc}}$, and $\gamma = \frac{1}{\tau_{rec}}$ using values entered for $R_0, \tau_{inc},$ and $\tau_{rec}$. \textcircled{\tiny 7} Set the simulation time in days. \textcircled{\tiny 8} Set the value for the basic reproduction number $R_0$. \textcircled{\tiny 9} Set the initial number of infected cases $I(0)$. \textcircled{\tiny 10} Set the incubation time $\tau_{inc}$. \textcircled{\tiny 11} Set the recovery time $\tau_{rec}$. \textcircled{\tiny 12} Settings for recusceptibility, including the percentage of resusceptible cases $\xi$ and duration of temporal immunity $\tau_\xi$. \textcircled{\tiny 13} Settings for control action, including the effeciency rate $\sigma$ as well as the time delay during pre- and post-control action, $\tau_{pre-\sigma}$ and $\tau_{post-\sigma}$, respectively. \textcircled{\tiny 14} Reset the GUI and clear all plots. \textcircled{\tiny 15} Run the simulation. \textcircled{\tiny 16} Recreate the graphs on external Matlab figure windows. \textcircled{\tiny 17} Graphical plots from the simulation (left figure for overall simulation while right figure compare initial projections of the model with real-world data).}
            \label{fig:GUI}
          \end{figure*}

          \subsection{Establishing Simulation Settings}\label{Settings}
          At the top section of the GUI are some interactive interfaces available for the user to set key simulation settings, which include the following:
          \begin{itemize}
            \item General Settings:
            \begin{itemize}
              \item {\it Stock Data}: Use this to load real-world data of select countries. The data are obtained from \cite{LancetDong}.
              \item {\it Stock Population}: The stock population $N$ is entered here.
              \item {\it Recovered Cases}: Use this to set the percentage of recovered cases $\kappa$.
              \item {\it Elderly Population}: Use this to set the fraction of elderly population (above 65 years of age) $N_{old}$.
              \item {\it Elderly Fatality Rate}: Use this to set the fatality rate $(1-\kappa_{old})$ for the elderly population.
              \item {\it SEIR Parameters}: Use this to set the values for $R_0, \tau_{inc}, \tau_{rec}$, the initial infected cases $I(0)$, and the simulation time.
            \end{itemize}
            \item Resusceptibility Settings:
            \begin{itemize}
              \item {\it Resusceptible Cases}: Use this to set the percentage of recovered cases who are resusceptible.
              \item {\it Duration of temporal immunity}: Use this to set the time of short-term immune response $\tau_\xi$, assuming there is no permanent immunity after recovery.
            \end{itemize}
            \item Control Action Settings:
            \begin{itemize}
              \item {\it Control Action Efficiency}: Use this to set the percentage of control action efficiency $\sigma$.
              \item {\it Pre-action Time Delay}: Use this to set the time delay $\tau_{pre-\sigma}$ for the control action to be introduced after the first confirmed case.
              \item {\it Post-action Time Delay}: Use this to set the time delay $\tau_{post-\sigma}$ to mimic the time it takes for the population to respond to the control action.
            \end{itemize}
          \end{itemize}

          \subsection{Simulation Results}\label{Results}
          The simulation results are displayed at the bottom section of the GUI. The plot on the right shows the initial fit of the model using the settings established in Section \ref{Settings} onto the real-world data of the select country, while the plot on the left shows the simulation results until the simulation stop time.

          \end{document}